\documentclass[sigconf, nonacm]{acmart}

\makeatletter
\def\@ACM@checkaffil{% Only warnings
    \if@ACM@instpresent\else
    \ClassWarningNoLine{\@classname}{No institution present for an affiliation}%
    \fi
    \if@ACM@citypresent\else
    \ClassWarningNoLine{\@classname}{No city present for an affiliation}%
    \fi
    \if@ACM@countrypresent\else
        \ClassWarningNoLine{\@classname}{No country present for an affiliation}%
    \fi
}
\makeatother

\usepackage{pvldb}% VLDB/PVLDB formatting overrides, see pvldb.sty for details.
\renewcommand\vldbdoi{XX.XX/XXX.XX}
\renewcommand\vldbpages{XXX-XXX}
\renewcommand\vldbavailabilityurl{URL_TO_YOUR_ARTIFACTS}

\usepackage{caption}
\usepackage{amsmath,amsfonts}
\usepackage{amsthm}
\usepackage{graphicx}
\usepackage{textcomp}
\usepackage{subfigure}
\usepackage{booktabs}
\usepackage{xcolor}
\usepackage{multirow}
\usepackage{array}
\usepackage{threeparttable}
\usepackage[ruled,vlined]{algorithm2e}
\usepackage{balance}
\usepackage{makecell}
\usepackage{capt-of}

\newtheorem{definition}{Definition}

\newtheorem{lemma}{Lemma}

\newtheorem{theorem}{Theorem}

\newcommand{\cbullet}{\raisebox{0.15ex}{\small$\bullet$}\hspace{0.5em}}

\begin{document}
\title[
Time-Decayed Vector Search in the Rhythm of TANGO: Jointly Modeling Semantic Similarity and Temporal Freshness
]{Time-Decayed Vector Search in the Rhythm of TANGO: \\Jointly Modeling Semantic Similarity and Temporal Freshness}

%%
%% The "author" command and its associated commands are used to define the authors and their affiliations.
\author{Jiuqi Wei}
\affiliation{%
  \institution{OceanBase, Ant Group}
  % \city{Beijing}
  % \country{China}
}
\email{weijiuqi.wjq@antgroup.com}

\author{Qiyao Luo}
\affiliation{%
  \institution{OceanBase, Ant Group}
  % \city{Shanghai}
  % \country{China}
}
\email{luoqiyao.lqy@antgroup.com}

\author{Quanqing Xu}
\affiliation{%
  \institution{OceanBase, Ant Group}
  % \city{Singapore}
  % \country{Singapore}
}
\email{xuquanqing.xqq@oceanbase.com}

\author{Chuanhui Yang}
\affiliation{%
  \institution{OceanBase, Ant Group}
  % \city{Beijing}
  % \country{China}
}
\email{rizhao.ych@oceanbase.com}

\author{Themis Palpanas}
\affiliation{%
  \institution{LIPADE, Universit{\'e} Paris Cit{\'e}}
  % \city{Paris}
  % \country{France}
}
\email{themis@mi.parisdescartes.fr}

%%
%% The abstract is a short summary of the work to be presented in the
%% article.
\begin{abstract}
Vector search typically measures relevance through semantic similarity under a fixed scoring function. However, in a growing range of applications, relevance may evolve over time, making temporal freshness an additional signal beyond semantic similarity. In this paper, we formalize \emph{time-decayed vector search} (TDVS), which incorporates continuous temporal decay into the search objective so that relevance is jointly determined by semantic similarity and temporal freshness. We design \emph{Score-Preserving Temporal Reduction} (STR) that enables existing Maximum Inner Product Search indexes to directly support TDVS. We further present \emph{Chronos}, a TDVS-native framework that derives an exact metric formulation and introduces \emph{Query-Orthogonal TimeLift} to control data--data geometry while preserving all query--data scores and rankings. Building on Chronos, we propose \emph{TANGO}, a hierarchical graph index that adopts layer-specific TimeLift geometries to preserve temporal locality at the base layer while strengthening long-range semantic connectivity in upper layers. TANGO traverses the hierarchy using the exact TDVS score, caches temporal factors to reduce computation, and supports efficient online insertion. Extensive experiments show that TANGO achieves up to $3.5\times$ higher query throughput and $4.05\times$ faster index construction than state-of-the-art graph-based competitors. TANGO also maintains its advantage over all competitors across diverse temporal settings and enables efficient online insertion, demonstrating its robustness and practicality.
\end{abstract}

\maketitle

%%% do not modify the following VLDB block %%
%%% VLDB block start %%%
\vldbtopmatter
%%% VLDB block end %%%

\section{Introduction}
%\textbf{Background and Problem.}
Vector search has become a fundamental building block of modern data management and AI systems, supporting a wide range of applications such as semantic search, retrieval-augmented generation (RAG), large language models (LLMs), recommender systems, and agentic workflows~\cite{DBLP:conf/wims/EchihabiZP20, wei2026virtuouscycleaipoweredvector}. 
In its standard form, vector search returns the items that are most similar to a query under a fixed similarity function, such as inner product, cosine similarity, or Euclidean distance~\cite{wang2023graph,azizi2025graph}. This formulation works well in many cases, but it assumes that relevance depends only on semantic similarity and remains unchanged once an item's embedding is fixed. Such an assumption no longer holds in many emerging applications, where \emph{temporal freshness} is becoming an increasingly important factor in determining relevance~\cite{ouyang2025hoh,tang2025evowiki,wu2024time}.

\textbf{Emerging Scenarios.}
Consider several representative examples. 
In agent memory~\cite{park2023generative,openclaw_memory_search}, an agent may store a long history of past interactions, observations, and intermediate reasoning results. For a new query, an old memory may be semantically very similar, yet still be less useful than a more recent one that better reflects the agent's current state, goals, or environment. 
In freshness-sensitive RAG~\cite{qian2024timer4} and temporal question answering~\cite{kasai2023realtime}, older passages may still retain partial utility, but should gradually become less competitive than more recent passages as the context evolves. 
In recency-sensitive web search~\cite{dong2010towards}, the relevance of a page may change on the scale of days or even hours for rapidly evolving events, requiring search results to account for both topical relevance and freshness.
In dynamic content recommendation~\cite{ryu2025news}, older items may still match user interests, but should often be ranked below newer items whose utility is more aligned with current attention and demand.
This demand is also reflected in the industry: Qdrant, for example, supports exponential time-based score boosting for news and other freshness-sensitive searches~\cite{qdrant_search_relevance}.

\textbf{Existing Time-Aware Pipelines.}
Existing time-aware vector-search approaches mainly follow two paradigms. The first is \emph{filter-based} search, where time is represented as metadata and range filters are applied before or during search~\cite{wang2025timestamp,gollapudi2023filtered}. In this design, temporal information is used mainly to restrict the search space, while the underlying vector search objective remains unchanged and continues to rank candidates purely by semantic similarity. The second is \emph{rerank-based} search, which performs semantic search first and then incorporates temporal signals through reranking or score fusion~\cite{dong2010towards,openclaw_memory_search}. Such methods retain a standard semantic retriever as the backbone and introduce time only after an initial candidate set has been produced. 
Overall, current vector search solutions still treat time as an auxiliary factor layered on top of semantic search, rather than as a native component of the search objective.

\textbf{Limitations and Motivation.}
Despite these efforts, existing vector search solutions remain inadequate for dynamic workloads in which relevance derived from semantic similarity changes continuously over time. 
Filter-based methods are too coarse to capture soft temporal decay, since relevance often decreases gradually rather than disappearing at a fixed timestamp threshold. 
Rerank-based pipelines are limited by the recall of the first-stage retriever: if an item with the right semantic--temporal trade-off is not retrieved initially, no downstream reranker can recover it. 

This gap calls for a new formulation and a corresponding system solution. Rather than adding temporal signals to an existing vector search pipeline, we ask a more fundamental and challenging question: \emph{how should vector search be reformulated when relevance changes continuously over time?} 
Addressing this question requires incorporating temporal decay directly into the search objective, so that indexing and search can be built around relevance jointly determined by semantics and time.
This motivates a new problem that we call \emph{time-decayed vector search}: given a query vector and a collection of timestamped data vectors, retrieve the items that are most relevant under a scoring function that combines semantic similarity with time decay.

\textbf{Our Solution.}
In this paper, we formalize \emph{Time-Decayed Vector Search} (TDVS) as a new vector search problem and present \emph{Chronos}, a TDVS-native framework, together with its hierarchical graph index \emph{TANGO} to efficiently support TDVS.
Specifically, we first formulate TDVS over timestamped vectors by incorporating continuous temporal decay directly into the search objective, so that relevance is jointly determined by semantic similarity and temporal freshness at query time (Section~\ref{sec:tdvs}).
We then design two solutions for TDVS: a generic solution \emph{Score-Preserving Temporal Reduction} (STR) (Section~\ref{sec:str}) and a TDVS-native solution \emph{Chronos} (Section~\ref{sec:chronos_framework}).
STR exactly reduces TDVS to conventional Maximum Inner Product Search (MIPS), enabling existing MIPS indexes to be directly reused to support TDVS, but it inherits structural constraints from the temporal factorization.

To overcome these constraints, we present the \emph{Chronos} framework, which derives an exact metric formulation for unit-norm embeddings and introduces \emph{Query-Orthogonal TimeLift} to control data--data geometry, while preserving all query--data scores and rankings.
Building on Chronos, we propose \emph{TANGO} (Section~\ref{sec:tango}), short for \emph{\underline{T}ime-\underline{A}ware \underline{N}avigable \underline{G}raph with Query-\underline{O}rthogonal TimeLift}. 
TANGO is a TDVS-native hierarchical graph index that assigns different TimeLift geometries to different graph layers, enabling upper layers to strengthen long-range semantic connectivity while the base layer preserves temporal locality. TANGO traverses the hierarchy under the exact TDVS score, caches temporal factors to eliminate repeated exponentiation, and supports efficient online insertion. 
Extensive experiments on seven real-world datasets demonstrate that TANGO consistently outperforms state-of-the-art graph-based competitors in both query and index performance while remaining robust across diverse temporal settings (Section~\ref{sec:experiments}).

Our contributions are summarized as follows: 

\noindent\cbullet We formalize \emph{time-decayed vector search} (TDVS) and study two representative objectives that combine semantic similarity and temporal freshness additively or multiplicatively.
We further design \emph{Score-Preserving Temporal Reduction} (STR), an exact score-preserving reduction that converts TDVS into Maximum Inner Product Search (MIPS), enabling existing MIPS indexes to directly support TDVS.

\noindent\cbullet We present \emph{Chronos}, a TDVS-native framework that provides an exact metricization of the complete TDVS objective.
Its \emph{Query-Orthogonal TimeLift} establishes a family of query-equivalent, but geometry-distinct metric spaces, enabling controllable semantic--temporal geometry, while exactly preserving every TDVS score and ranking.

\noindent\cbullet We propose \emph{TANGO}, a TDVS-native hierarchical graph index built on Chronos. TANGO realizes layer-specific semantic--temporal geometries within a unified graph, jointly supporting temporal
locality and long-range semantic connectivity, and at the same time, enabling efficient online insertion.

\noindent\cbullet We conduct extensive experiments comparing TANGO with state-of-the-art graph-based methods.
TANGO achieves up to $3.5\times$ higher query throughput and $4.05\times$ faster index construction than the best-performing competitors.
Across diverse temporal settings, TANGO still outperforms all competitors and supports efficient online insertion, demonstrating its robustness and practicality.

\begin{table}[t]
  \centering
  \caption{Summary of key notations.}
  \vspace*{-0.2cm}
  \label{tab:notations}
  \renewcommand{\arraystretch}{1.02}
  \setlength{\tabcolsep}{3pt}
  \resizebox{\columnwidth}{!}{%
  \begin{tabular}{@{}ll@{}}
    \toprule
    \textbf{Notation} & \textbf{Description} \\
    \midrule
    $\mathcal D,n,d$
    & Dataset, cardinality, and vector dimension \\

    $(x_i,t_i),(q,\tau)$
    & Data and query vector--time pairs \\

    $s(\cdot,\cdot),\Delta_i,w(\cdot)$
    & Semantic similarity, item age, and time-decay function \\

    $\lambda,h$
    & Decay rate and half-life, with $\lambda=\ln 2/h$ \\

    $F_m,\alpha,k$
    & TDVS score, additive semantic weight, and result size \\

    $\psi_{Q/X},T,z_i^{(T)},g_\tau^{(T)}$
    & STR encoders, anchor, and temporal factors \\

    $\widetilde q_\tau^m,\widetilde x_i^m$
    & STR mapped query/data vectors \\

    $k_\lambda,\phi_\lambda,Q_m,P_m,\mathcal H_m$
    & Chronos kernel, feature map, mappings, and Hilbert space \\

    $c_{ij},k_{ij}$
    & Pairwise semantic and temporal affinities \\
    
    $d_m,D_m$
    & Hilbert-space metric and its squared form \\  
    $\kappa,\kappa_\ell,d_m^{(\kappa)},D_m^{(\kappa)}$
    & TimeLift parameter, layer value, metric, and squared form \\

    $\mathcal G_\ell$
    & TANGO graph at layer $\ell$ \\

    $t_0,a_\tau,b_i$
    & Reference time and cached query/data factors \\
    \bottomrule
  \end{tabular}%
  }
  \vspace*{-0.3cm}
\end{table}

\begin{figure*}[t]
    \vspace*{-1.1cm}
    \centering

    \begin{minipage}{0.95\textwidth}
        \centering

        \subfigure[Candidate points with half-life $h=30$.]{
            \includegraphics[
                width=0.255\linewidth
            ]{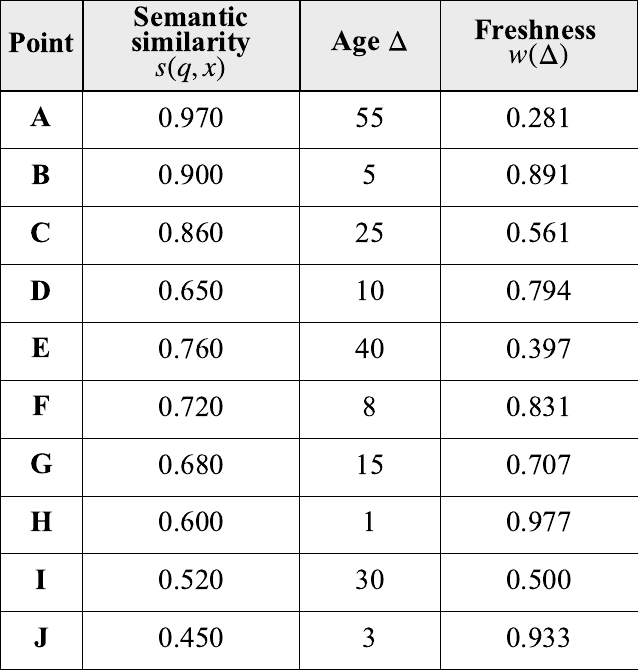}
            \label{fig:tdvs-example-statistics}
        }
        \hfill
        \subfigure[Distribution in semantic similarity--freshness space. 
        ]{
            \includegraphics[
                width=0.32\linewidth
            ]{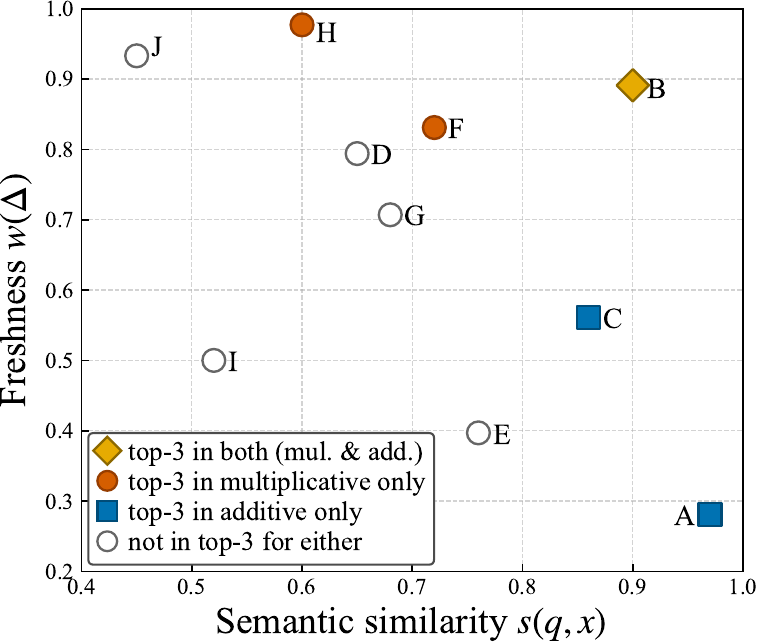}
            \label{fig:tdvs-example-scatter}
        }
        \hfill
        \subfigure[Scores and rankings under the two TDVS modes.]{
            \includegraphics[
                width=0.30\linewidth
            ]{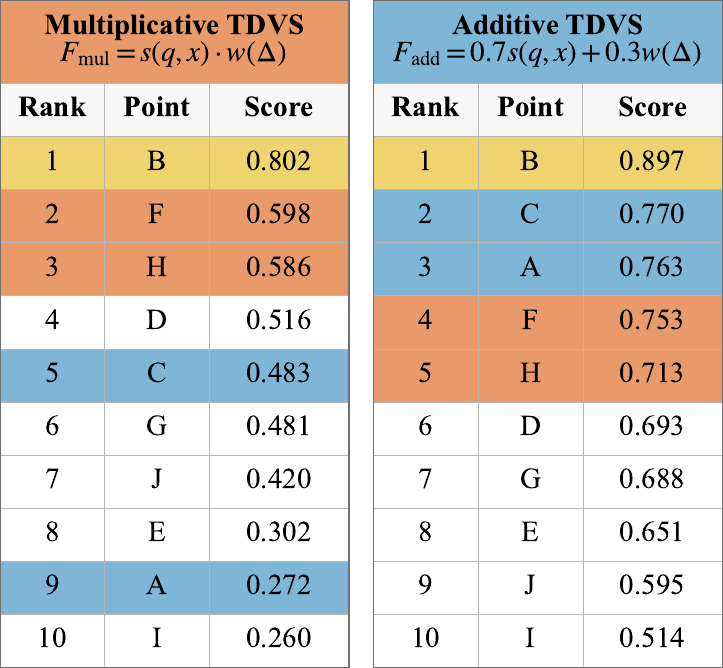}
            \label{fig:tdvs-example-rankings}
        }

    \end{minipage}
    \vspace*{-0.1cm}
    \caption{An illustrative example of time-decayed vector search: (a) candidates' semantic similarity to query and temporal freshness (with half-life $h=30$); (b) and (c) TDVS ranking results: (c) reports the scores and rankings under multiplicative TDVS, and additive ($\alpha=0.7$) TDVS, while (b) visualizes these results in the semantic similarity--freshness space, where yellow, orange, blue, and white denote top-3 membership under both modes, multiplicative only, additive only, and neither, respectively.}
    \label{fig:tdvs-example}
    \vspace*{-0.2cm}
\end{figure*}

\section{Problem Statement}

\subsection{Classical Vector Search}

We begin with the standard vector search problem. Given a dataset $\mathcal{D}=\{x_i\}_{i=1}^n$, where $x_i\in\mathbb{R}^d$, a query $q\in\mathbb{R}^d$, a similarity function $s(\cdot,\cdot)$, and an
integer $k$, \emph{exact vector search} returns the $k$ data vectors with the highest similarity to the query:
$
\mathcal N_k(q)
=
\operatorname{TopK}_{x_i\in\mathcal D}^{k}
s(q,x_i).
$
Table~\ref{tab:notations} summarizes the key notation used throughout the paper.

In high-dimensional settings, exact vector search becomes challenging due to the curse of dimensionality~\cite{hinneburg2000nearest,borodin1999lower}; \emph{approximate vector search} is therefore the more practical choice, returning an approximate top-$k$ set $\widehat{\mathcal{N}}_k(q)$ at substantially lower retrieval cost~\cite{hydra1, hydra2}, with answer quality measured by
$\mathrm{Recall}@k = \frac{|\widehat{\mathcal{N}}_k(q)\cap \mathcal{N}_k(q)|}{k}$.

% We begin with the standard vector search problem. Let $\mathcal{D}=\{x_i\}_{i=1}^n$ be a collection of data vectors, where $x_i\in\mathbb{R}^d$, and let $q\in\mathbb{R}^d$ be a query vector. Given a similarity function $s(q,x_i)$, the goal of vector search is to retrieve the items that are most similar to the query.

% \begin{definition}[Exact Vector Search]
% Given a dataset $\mathcal{D}=\{x_i\}_{i=1}^n$, a query vector $q$, a similarity function $s(\cdot,\cdot)$, and an integer $k$, the exact vector search problem returns the top-$k$ set
% $
% \mathcal N_k(q)
% =
% \operatorname{TopK}_{x_i\in\mathcal D}^{k}
% s(q,x_i).
% $
% \end{definition}

% In high-dimensional settings, exact vector search becomes challenging because of the \emph{curse of dimensionality} phenomenon~\cite{hinneburg2000nearest,borodin1999lower}, making approximate vector search the more practical choice, whihch returns an approximate top-$k$ set $\widehat{\mathcal{N}}_k(q)$ at substantially lower retrieval cost, with answer quality commonly measured by
% $\mathrm{Recall}@k=|\widehat{\mathcal{N}}_k(q)\cap\mathcal{N}_k(q)|/k$.

% \begin{definition}[Approximate Vector Search]
% Given the same input as exact vector search, the approximate vector search problem returns an approximate top-$k$ set $\widehat{\mathcal{N}}_k(q)$ that aims to remain close to $\mathcal{N}_k(q)$ while substantially reducing retrieval cost.
% \end{definition}

% A common evaluation metric is $\mathrm{Recall}@k = \frac{|\widehat{\mathcal{N}}_k(q)\cap \mathcal{N}_k(q)|}{k}$.

Classical vector search, whether exact or approximate, assumes that relevance is fully determined by semantic similarity and does not change once an item's embedding is fixed. This assumption is appropriate for many workloads, but it becomes insufficient when the usefulness of an item changes with time~\cite{park2023generative,qian2024timer4,kasai2023realtime,ryu2025news}.

\subsection{Time-Decayed Vector Search} \label{sec:tdvs}

We now introduce \emph{time-decayed vector search}, where each data vector is associated with a timestamp. Formally, let $\mathcal{D}=\{(x_i,t_i)\}_{i=1}^n$, where $x_i\in\mathbb{R}^d$ is the vector of item $i$, and $t_i$ denotes the time at which the item is created. For a query issued at time $\tau$, we define the age of item $i$ as $\Delta_i(\tau)=\tau-t_i$, where $\Delta_i(\tau)\ge 0$.

\begin{definition}[Time-Decay Function]
A time-decay function is a mapping
$w:\mathbb{R}_{\geq 0}\rightarrow(0,1]$ that assigns a freshness
weight to a data vector according to its age $\Delta$, with
$w(0)=1$ and $w(\Delta)$ monotonically non-increasing in $\Delta$.
\end{definition}

We use the exponential decay function
\[
w_\lambda(\Delta)=e^{-\lambda\Delta},
\qquad \lambda\geq0,
\]
where $\lambda$ controls the decay rate.
The exponential family is well motivated in both theory and practice. It is commonly used to model the \emph{Ebbinghaus Forgetting Curve}~\cite{murre2015replication} and admits the interpretable half-life parameterization
\[
w_h(\Delta)=e^{-(\ln 2/h)\Delta}=2^{-\Delta/h},
\qquad h>0,
\]
which is equivalent to $w_\lambda(\Delta)$ when $\lambda=\ln 2/h$~\cite{settles2016trainable}. Similar decay functions are used for recency modeling in Generative Agents and OpenClaw~\cite{park2023generative,openclaw_memory_search}, although typically as post-retrieval scoring, or reranking components, rather than as a native search objective. 

% This form is supported by both theory and practice. From the theoretical side, such behavior is commonly described in memory science by \emph{Ebbinghaus Forgetting Curve}, which characterizes how retention decreases as elapsed time grows~\cite{murre2015replication}. 
% Exponential decay is one of the standard forms used to model this process, and half-life models can be viewed as an equivalent reparameterization in which the decay rate is expressed by the time required for the utility to drop by half~\cite{settles2016trainable}. Specifically, the half-life form
% \[
% w_h(\Delta)=e^{-(\ln 2 / h)\Delta}=2^{-\Delta/h},
% \]
% where $h>0$ denotes the half-life. This parameterization is equivalent to $w_\lambda(\Delta)=e^{-\lambda\Delta}$ when $\lambda=\ln 2/h$.

% From the practical side, several systems already model recency using this family of functions. For example, Generative Agents uses an exponentially decayed recency component in memory retrieval~\cite{park2023generative}, while OpenClaw exposes a half-life-based temporal decay option in memory search~\cite{openclaw_memory_search}. Beyond retrieval, half-life models have also been adopted in practical learning systems to characterize time-dependent recall and utility~\cite{settles2016trainable}. 
% These examples suggest that the exponential family and its half-life parameterization are already compatible with practical system design, although current systems typically use them as post-retrieval scoring adjustments or reranking components rather than as a native search objective.

\begin{definition}[Time-Decayed Vector Search]
Given a timestamped vector dataset $\mathcal{D}=\{(x_i,t_i)\}_{i=1}^n$, a query vector $q$, a query time $\tau$, a similarity function $s(\cdot,\cdot)$, and a time-decay function $w(\cdot)$, the time-decayed vector search (TDVS) problem retrieves the top-$k$ items under a joint semantic-temporal score
$
F\!\left(s(q,x_i),\,w(\Delta_i(\tau))\right),
$ 
where $\Delta_i(\tau)=\tau-t_i$ denotes the age of item $i$. Formally,
\[
\mathcal N_k^{\mathrm{tdvs}}(q,\tau)
=
\operatorname{TopK}_{(x_i,t_i)\in\mathcal D}^{k}
F\!\left(
s(q,x_i),
w(\Delta_i(\tau))
\right).
\]
\end{definition}

% Like classical vector search, time-decayed vector search (TDVS) can be studied in both exact and approximate forms. Exact TDVS returns the true top-$k$ set $\mathcal{N}^{\mathrm{tdvs}}_k(q,\tau)$ under $F$, while approximate TDVS aims to retrieve a high-quality approximation $\widehat{\mathcal{N}}^{\mathrm{tdvs}}_k(q,\tau)$ at substantially lower search cost.

In this paper, we study two fundamental and practically important formulations of TDVS, \emph{additive} and \emph{multiplicative} time-decayed vector search. 
These two formulations capture distinct ways in which temporal freshness interacts with semantic similarity.

\begin{definition}[Additive Time-Decayed Vector Search]
\label{def:additive-tdvs}
The \emph{additive} form of TDVS is defined by the scoring function
\[
F_{\mathrm{add}}(q,x_i,\tau)
=
\alpha\,s(q,x_i) 
+
(1-\alpha)\,w(\Delta_i(\tau)),
\qquad \alpha \in \left[0, 1\right].
\] 
\vspace*{-0.3cm}
\end{definition}

% The additive form treats temporal freshness as an additional preference signal that complements semantic similarity. The semantic term $s(q,x_i)$ measures content match, while the temporal term $w(\Delta_i(\tau))$ gives an extra advantage to newer items. The coefficient $\alpha$ controls the relative importance of these two factors. Therefore, this form is suitable for settings in which time matters but should not dominate semantic relevance. In such cases, older items may still remain competitive if their semantic similarity is sufficiently strong, while newer items benefit from recency without fully overriding semantic evidence. For example, in agent systems, this form is more suitable for managing memory types whose usefulness is not tightly coupled to the current state, such as long-term contextual memory, where recency is treated as an additional preference rather than as a strict decay factor~\cite{park2023generative}. More broadly, the additive form is well suited to soft-preference settings, where time acts as a bias toward freshness.

\begin{definition}[Multiplicative Time-Decayed Vector Search]
The \emph{multiplicative} form of TDVS is defined by the scoring function\footnote{When the semantic score is signed, multiplicative TDVS attenuates its magnitude toward the neutral value zero while preserving its sign. This behavior is natural because positive and negative scores represent opposite but meaningful semantic associations: as an item ages, the strength of either association should diminish, while its direction should remain unchanged.
Moreover, whenever a query has at least $k$ positive-score items, negative-score items cannot appear in its exact TDVS top-$k$ result, because the strictly positive decay factor preserves score signs.}
\label{def:multiplicative-tdvs}
\[
F_{\mathrm{mul}}(q,x_i,\tau)
=
s(q,x_i) \, w(\Delta_i(\tau)).
\]
\vspace*{-0.4cm}
\end{definition}

Note that in practice, the TDVS parameters are straightforward to configure: the half-life $h$ directly specifies the application-specific timescale at which freshness is reduced by half, while the additive weight $\alpha$ controls the desired semantic--temporal trade-off. Similar controls are already used in existing systems:
OpenClaw adopts a 30-day recency half-life~\cite{openclaw_memory_search}, while Qdrant exposes configurable decay timescales and score weights~\cite{qdrant_search_relevance}. When relevance feedback is available, $h$ and $\alpha$ can also be tuned on held-out queries.

% The multiplicative form imposes a tighter coupling between semantic similarity and temporal freshness. Instead of adding freshness as a separate preference term, it directly scales the semantic contribution by the decay factor. As a result, even a highly similar item can become substantially less competitive if it is sufficiently old. Therefore, this form is more appropriate for settings in which freshness is a stronger requirement and stale information should be discounted more aggressively. Representative examples include freshness-sensitive RAG and temporal question answering, where older passages may still be semantically related to the query but should often be ranked below more recent evidence as the context evolves~\cite{qian2024timer4,kasai2023realtime}. Within agent systems, this form is more suitable for managing memory types whose usefulness is tightly coupled to the current state, such as episodic memory~\cite{openclaw_memory_search}. Compared with the additive form, the multiplicative form yields a sharper notion of temporal relevance, since the magnitude of the semantic contribution decreases
% continuously with age.

Figure~\ref{fig:tdvs-example} illustrates how semantic similarity and temporal freshness jointly affect TDVS ranking (in this example, half-life of $h=30$ and additive weight $\alpha=0.7$).
Point $B$, which has both high semantic similarity and high
freshness, ranks first under both objectives.
Multiplicative TDVS promotes the fresher points $F$ and $H$ into the top-$3$, while substantially lowering the rank of the highly similar, but old, point $A$.
In contrast, additive TDVS retains $C$ and $A$ in the top-$3$ because semantic similarity receives the larger weight.

Figure~\ref{fig:tdvs-example-rankings} demonstrates that the two TDVS modes are suitable for different freshness requirements.
Additive TDVS treats temporal freshness as a complementary preference signal, with $\alpha$ controlling the semantic--temporal trade-off.
It is therefore appropriate when time matters, but strong semantic matches should remain competitive, such as long-term contextual memories, whose usefulness is not tightly coupled to an agent's current state~\cite{park2023generative}.
Multiplicative TDVS instead scales semantic relevance directly by freshness and discounts stale items more aggressively, making it more suitable for freshness-sensitive RAG, temporal question answering, and state-dependent episodic memories~\cite{qian2024timer4,kasai2023realtime,
openclaw_memory_search}.
Thus, the additive form models a soft preference for recency, whereas the multiplicative form imposes a tighter coupling between semantic relevance and temporal freshness.

\subsection{STR: Exact TDVS-to-MIPS Reduction}
\label{sec:str}

The TDVS formulation in Section~\ref{sec:tdvs} is defined using a general semantic similarity function $s(\cdot,\cdot)$. 
Here, a \emph{semantic measure} refers to the underlying query--data measure used in vector search to quantify semantic similarity; common choices include inner product, cosine similarity, and Euclidean distance.
Rather than developing a separate TDVS solution for each semantic measure, we investigate whether their additive and multiplicative TDVS objectives can be reduced to a common computational primitive.

Through mathematical derivation, we show that for each of these three semantic measures, the additive and multiplicative TDVS score functions defined in Definitions~\ref{def:additive-tdvs} and~\ref{def:multiplicative-tdvs} can be represented exactly as finite-dimensional inner products.
Consequently, TDVS can be reduced to the Maximum Inner Product Search (MIPS) problem while preserving all query--data scores.
We call this exact TDVS-to-MIPS reduction
\emph{Score-Preserving Temporal Reduction} (\emph{STR}). 

STR proceeds in three steps: semantic encoding, temporal factorization, and vector mapping. Semantic encoding converts query and data vectors so that their inner product reproduces the original semantic score. Temporal factorization decomposes exponential decay into query- and data-side factors. Vector mapping then incorporates these temporal factors into the encoded semantic features, producing mode-specific query and data vectors whose inner product equals the complete TDVS score.

\paragraph{Step 1: Semantic encoding.}
For each semantic measure, we define a query encoder $\psi_Q$ and a
data encoder $\psi_X$. Given a query $q$ and a data vector $x_i$,
these encoders produce semantic feature vectors $\psi_Q(q)$ and
$\psi_X(x_i)$ such that
\begin{equation}
    s(q,x_i)
    =
    \left\langle
        \psi_Q(q),
        \psi_X(x_i)
    \right\rangle.
    \label{eq:semantic-ip-encoding}
\end{equation}

For inner-product semantics, no additional encoding is required: $\psi_Q(q)=q$ and $\psi_X(x_i)=x_i$.
For cosine similarity over nonzero vectors, the encoded features are their L2-normalized representations:
$\psi_Q(q)=\frac{q}{\|q\|_2},\psi_X(x_i)=\frac{x_i}{\|x_i\|_2}.$
% The data vectors can be normalized once during index construction, while each query is normalized once before search.
For Euclidean semantics, let $\rho$ be a fixed normalization scale such that $\|q-x_i\|_2\leq \rho$ for every valid query--data pair.
We use the following ranking-equivalent transformation of squared Euclidean distance:
$s_{\mathrm{L2}}(q,x_i)=1-\frac{2}{\rho^2}\|q-x_i\|_2^2$, which lies in $[-1,1]$ and maintains the same ranking as the Euclidean distance.
Its semantic encoding is
\begin{equation}
    \psi_Q(q)
    =
    \begin{bmatrix}
        \frac{2}{\rho}q\\
        1-\frac{2}{\rho^2}\|q\|_2^2\\
        1
    \end{bmatrix},
    \qquad
    \psi_X(x_i)
    =
    \begin{bmatrix}
        \frac{2}{\rho}x_i\\
        1\\
        -\frac{2}{\rho^2}\|x_i\|_2^2
    \end{bmatrix}.
    \label{eq:l2-semantic-encoding}
\end{equation}
% The squared norm of each data vector can be precomputed once, while the query norm is computed once per query.

\paragraph{Step 2: Temporal factorization.}
Let $T$ be a temporal anchor; define a data-side temporal factor $z_i^{(T)}$, and a query-side temporal factor $g_\tau^{(T)}$ as
\begin{equation}
    z_i^{(T)}
    =
    e^{-\lambda(T-t_i)},
    \qquad
    g_\tau^{(T)}
    =
    e^{-\lambda(\tau-T)}.
    \label{eq:str-factors}
\end{equation}
Their product recovers the original decay exactly:
\begin{equation}
    g_\tau^{(T)}z_i^{(T)}
    =
    e^{-\lambda(\tau-t_i)}.
    \label{eq:str-factorization}
\end{equation}
Although $T$ can be chosen arbitrarily, a common choice at index
construction time is $T=\max_i t_i$, which ensures
$0<z_i^{(T)}\leq1$ for every data vector in the initial dataset.

\paragraph{Step 3: Vector mapping.}
We now combine the encoded semantic features from Step~1 with the temporal factors from Step~2. For each TDVS mode $m\in\{\mathrm{mul},\mathrm{add}\}$, STR defines a query mapping $\Phi_{Q,m}$ and a data mapping $\Phi_{X,m}$. Applying these mappings to a query $(q,\tau)$ and a timestamped data vector $(x_i,t_i)$ produces the corresponding STR query representation $\widetilde q_{\tau}^{m}=\Phi_{Q,m}(q,\tau)$ and data representation $\widetilde x_i^{m}=\Phi_{X,m}(x_i,t_i)$.

Since the differences among inner product, cosine similarity, and ranking-equivalent transformation of squared Euclidean distance have already been captured by the semantic encoders $\psi_Q$ and $\psi_X$, the query and data mappings apply uniformly to all three semantic measures.
For multiplicative TDVS, the mapping is defined as:
\begin{equation}
    \Phi_{Q,\mathrm{mul}}(q,\tau)
    =
    g_\tau^{(T)}\psi_Q(q),
    \qquad
    \Phi_{X,\mathrm{mul}}(x_i,t_i)
    =
    z_i^{(T)}\psi_X(x_i).
    \label{eq:str-mul-mapping}
\end{equation}
For additive TDVS, the mapping is defined as:
\begin{equation}
    \Phi_{Q,\mathrm{add}}(q,\tau)
    =
    \begin{bmatrix}
        \sqrt{\alpha}\,\psi_Q(q)\\
        \sqrt{1-\alpha}\,g_\tau^{(T)}
    \end{bmatrix},
    \,\,\,\,
    \Phi_{X,\mathrm{add}}(x_i,t_i)
    =
    \begin{bmatrix}
        \sqrt{\alpha}\,\psi_X(x_i)\\
        \sqrt{1-\alpha}\,z_i^{(T)}
    \end{bmatrix}.
    \label{eq:str-add-mapping}
\end{equation}
By Equations~\eqref{eq:semantic-ip-encoding} and
\eqref{eq:str-factorization}, the resulting representations satisfy
\begin{equation}
\begin{aligned}
    \left\langle
        \widetilde q_{\tau}^{\mathrm{mul}},
        \widetilde x_i^{\mathrm{mul}}
    \right\rangle=
        F_{\mathrm{mul}}(q,x_i,\tau),
    \,\,\,\,\,\,
    \left\langle
        \widetilde q_{\tau}^{\mathrm{add}},
        \widetilde x_i^{\mathrm{add}}
    \right\rangle=
        F_{\mathrm{add}}(q,x_i,\tau).
\end{aligned}
\label{eq:str-exactness}
\end{equation}
Equation~\eqref{eq:str-exactness} shows that the inner product between the mapped query and data representations is numerically identical to the target TDVS score. Consequently, STR introduces no score approximation and preserves the exact TDVS ranking.

STR provides a generic and practical solution to TDVS: with only lightweight query and data transformations, it enables existing MIPS indexes to directly support TDVS without changing their search objectives.
We instantiate STR with representative graph-based MIPS indexes~\cite{morozov2018non,liu2020understanding,tan2021norm,chen2025maximum,chen2025stitching}, which serve as competitive baselines in our experimental evaluation (Section~\ref{sec:experiments}).

Despite its exactness and practical convenience, STR inherits structural constraints from temporal factorization. At the representation level, transformed vectors depend on the global anchor $T$. Keeping $T$ fixed causes newly inserted vectors with $t_i>T$ to satisfy $z_i^{(T)}>1$, progressively widening the range of transformed norms; updating $T$, however, changes existing representations and may require re-encoding the stored vectors and rebuilding the index. At the geometry level, the pairwise temporal affinity
$z_i^{(T)}z_j^{(T)}=e^{-\lambda(2T-(t_i+t_j))}$
depends on the vectors' absolute freshness relative to $T$. This can make recent vectors more likely to become central hubs in the graph, which can distort semantic neighborhoods and make graph navigation more challenging.

\section{Chronos Framework} \label{sec:chronos_framework}

%\paragraph{From STR to Chronos.}
To overcome the structural constraints of STR, we propose \emph{Chronos}, a TDVS-native framework that removes the temporal anchor and decouples exact query--data score preservation from data--data geometry, enabling a better balance between temporal freshness and semantic similarity during indexing.
Figure~\ref{fig:tdvs-solution-overview} contrasts the two solution pipelines for TDVS. STR exactly reduces TDVS to MIPS and reuses state-of-the-art MIPS indexes, whereas Chronos provides a TDVS-native framework instantiated by TANGO. Both pipelines preserve the exact TDVS objective, but differ in whether the indexing geometry is inherited from generic MIPS methods or designed specifically for TDVS.

\begin{figure}[tb]
    \centering
    \includegraphics[width=\linewidth]
    {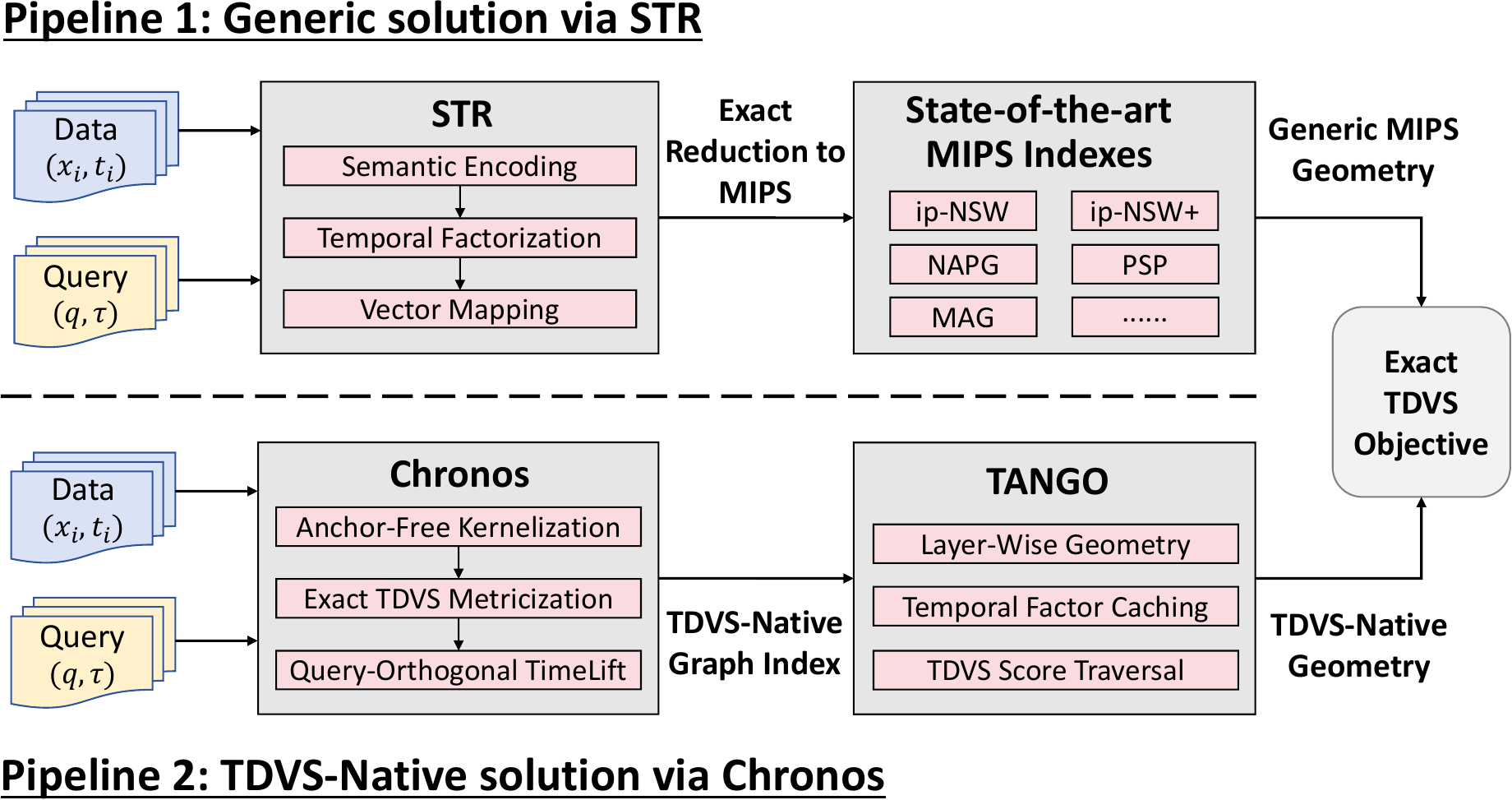}
    \caption{Overview of STR and Chronos for TDVS.}
    \label{fig:tdvs-solution-overview}
    \vspace*{-0.3cm}
\end{figure}

\paragraph{Chronos Overview.}
Chronos is a TDVS-native framework that preserves exact query--data scores, while providing controllable data--data geometry for efficient indexing. It has three components: an anchor-free temporal representation that recovers exponential decay from relative timestamps (Section~\ref{sec:anchor_free_temporal}); an exact metricization that maps queries and data to constant-norm, score-preserving representations, converting TDVS ranking into nearest-neighbor search (Section~\ref{sec:chronos_metricization}); and a controllable geometry adjustment that balances temporal freshness and semantic similarity without altering query--data scores (Section~\ref{sec:timelift}). Chronos realizes these components efficiently by evaluating query--data and data--data distances directly from the original embeddings and timestamps, without materializing the mapped representations (Section~\ref{sec:chronos_realization}).

\paragraph{Embedding regime.}
Modern embedding models and retrieval pipelines increasingly use normalized representations, projecting embeddings onto the unit sphere at model output, or before indexing~\cite{openai_embeddings_documentation, reimers2019sentence, zhang2025qwen3}. 
Normalization removes vector magnitude as an additional signal and unifies common similarity measures. For unit-norm embeddings,
$
\langle q,x_i\rangle
=
\cos(q,x_i)
=
1-\frac{1}{2}\|q-x_i\|_2^2.
$
Thus, inner product, cosine similarity, and canonical unit-sphere Euclidean similarity\footnote{Euclidean retrieval minimizes $\|q-x_i\|_2$. STR uses the
ranking-equivalent similarity $1-2\|q-x_i\|_2^2/\rho^2$. Under Chronos's unit-norm embedding regime, $\rho=2$, and this score reduces to $1-\|q-x_i\|_2^2/2=\langle q,x_i\rangle$.} induce identical TDVS objectives. Chronos targets this important regime and provides a unified metricization framework. 
Chronos can be extended to non-normalized embeddings through norm completion, which introduces an additional norm-dependent geometric component orthogonal to the temporal design studied here. We leave this extension to future work.

\subsection{Anchor-Free Temporal Representation}
\label{sec:anchor_free_temporal}
TDVS directly specifies only the temporal decay for a valid query--data pair, namely $e^{-\lambda(\tau-t_i)}$ for $\tau\geq t_i$.
However, index construction must also compare two data vectors and therefore requires a symmetric pairwise temporal affinity between their timestamps $t_i$ and $t_j$. 
STR implicitly induces an anchor-dependent pairwise temporal affinity $e^{-\lambda(2T-(t_i+t_j))}$, which measures the joint freshness of the two data vectors relative to $T$.
Chronos introduces an anchor-free form $e^{-\lambda|t_i-t_j|}$, which depends only on their relative temporal separation.
Figure~\ref{fig:anchor-free-temporal} illustrates this distinction. STR relates both timestamps to the global anchor $T$, whereas Chronos directly relates them through $|t_i-t_j|$, eliminating the anchor from the pairwise temporal geometry.
To obtain such a temporal affinity while preserving the original query--data decay, Chronos presents the \emph{Anchor-Free Temporal Kernel}.

\begin{figure}[tb]
    \centering
    \includegraphics[width=0.80\linewidth]
    {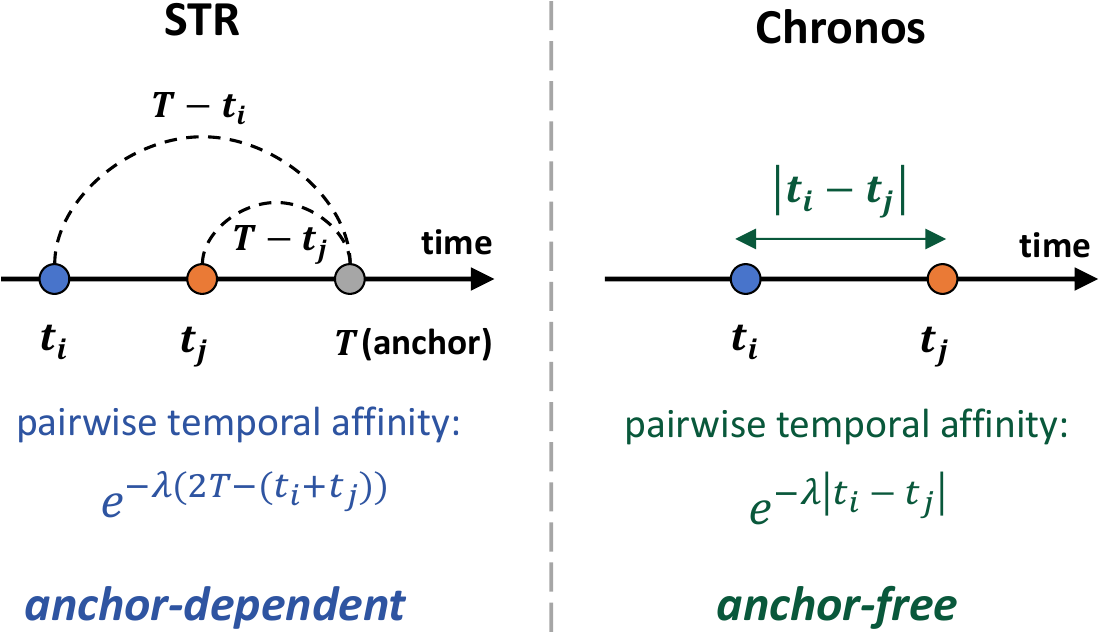}
    \caption{Anchor-dependent and anchor-free pairwise temporal affinities.}
    \label{fig:anchor-free-temporal}
\end{figure}

\begin{definition}[Anchor-Free Temporal Kernel]
\label{def:anchor-free-temporal-kernel}
For timestamps $s,t\in\mathbb{R}$, Chronos defines the anchor-free
temporal kernel as
\begin{equation}
    k_\lambda(s,t)
    =
    e^{-\lambda|s-t|},
    \qquad \lambda\geq0.
    \label{eq:chronos-temporal-kernel}
\end{equation}
\end{definition}

This kernel is the standard one-dimensional Laplacian kernel applied to timestamps.
For every valid query--data pair $\tau\geq t_i$, the kernel exactly recovers the original temporal decay: $k_\lambda(\tau,t_i)=e^{-\lambda|\tau-t_i|}=e^{-\lambda(\tau-t_i)}.$
Moreover, the kernel is symmetric, $k_\lambda(s,t)=k_\lambda(t,s)$, and depends only on the relative temporal separation $|s-t|$.
It is also translation invariant:
$k_\lambda(s+c,t+c)=k_\lambda(s,t),c\in\mathbb{R}.$ Hence, shifting the time origin, or inserting data with later timestamps does not alter the temporal affinities among indexed vectors. 

To use this temporal kernel as a geometric building block for TDVS, we establish its geometric realization in a Hilbert space: Lemma~\ref{lem:temporal-kernel-pd} (below) shows that there exists a unit-norm feature map whose pairwise inner products exactly recover $k_\lambda$.
This representation allows the temporal component to be combined with unit-norm semantic embeddings while preserving the TDVS score and, as shown in Section~\ref{sec:chronos_metricization}, enables its exact metricization.

\begin{lemma}[Unit-Norm Hilbert embedding of the temporal kernel]
\label{lem:temporal-kernel-pd}
For every $\lambda\geq0$, the anchor-free temporal kernel
$k_\lambda$ is positive semidefinite. Consequently, there exists a Hilbert space $\mathcal{H}_\lambda$, together with a feature map $\phi_\lambda:\mathbb{R}\rightarrow\mathcal{H}_\lambda$, such that for all $s,t\in\mathbb{R}$,
\begin{equation}
    \left\langle
        \phi_\lambda(s),
        \phi_\lambda(t)
    \right\rangle
    =
    k_\lambda(s,t),
    \,\,\,\,\,\,
    \|\phi_\lambda(s)\|=\|\phi_\lambda(t)\|=1.
    \label{eq:temporal-feature-map}
\end{equation}
\end{lemma}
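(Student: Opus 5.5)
Our plan has two parts: first show that $k_\lambda$ is positive semidefinite, then pass to a feature map via the Moore--Aronszajn (RKHS) construction. The unit-norm property then follows from the diagonal values of the kernel.

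\textbf{Step 1 (trivial case).} If $\lambda=0$, then $k_0\equiv 1$. We take $\mathcal H_0=\mathbb{R}$ and $\phi_0(s)=1$, and both claims hold. From now on we assume $\lambda>0$.

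\textbf{Step 2 (positive semidefiniteness).} We exhibit $k_\lambda$ as a Fourier transform of a nonnegative finite measure, using Bochner's theorem in its easy direction. Concretely, the Cauchy/Lorentzian identity
\[
e^{-\lambda|u|}=\int_{\mathbb{R}} e^{i\omega u}\,\frac{\lambda}{\pi(\lambda^2+\omega^2)}\,d\omega
\]
holds, and it can be checked by computing the inverse transform $\int e^{-\lambda|u|}e^{-i\omega u}\,du=2\lambda/(\lambda^2+\omega^2)$. Then for any $s_1,\dots,s_m\in\mathbb{R}$ and $c\in\mathbb{C}^m$,
\[
\sum_{a,b}c_a\bar c_b\,k_\lambda(s_a,s_b)=\int_{\mathbb{R}}\Bigl|\sum_a c_a e^{i\omega s_a}\Bigr|^2\frac{\lambda}{\pi(\lambda^2+\omega^2)}\,d\omega\ \ge 0 .
\]
In particular the inequality holds for real $c$, so every Gram matrix is positive semidefinite.

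An alternative route avoids Fourier analysis. It realizes $k_\lambda$ explicitly through the Ornstein--Uhlenbeck/Brownian construction: set $\phi_\lambda(s)=\sqrt{2\lambda}\,e^{-\lambda(s-\cdot)}\mathbf 1_{(-\infty,s]}(\cdot)\in L^2(\mathbb{R})$. A direct integral gives, for $s\ge t$,
\[
\langle\phi_\lambda(s),\phi_\lambda(t)\rangle=2\lambda\int_{-\infty}^{t}e^{-\lambda(s-r)}e^{-\lambda(t-r)}\,dr=e^{-\lambda(s-t)} .
\]
This yields both positive semidefiniteness and an explicit feature map at once. We would present this version, since it is self-contained, and mention Bochner only as a remark.

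\textbf{Step 3 (feature map and norms).} Taking $\mathcal H_\lambda=L^2(\mathbb{R})$ with the map above, or the RKHS of $k_\lambda$ with $\phi_\lambda(s)=k_\lambda(s,\cdot)$ in general, gives $\langle\phi_\lambda(s),\phi_\lambda(t)\rangle=k_\lambda(s,t)$. Unit norm is then immediate, because $\|\phi_\lambda(s)\|^2=k_\lambda(s,s)=e^{0}=1$.

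\textbf{Main obstacle.} The only real work is the positive semidefiniteness step, i.e.\ choosing a verification that is rigorous and short. The explicit $L^2$ integral handles this cleanly. We also need to treat the case $s<t$, which follows by symmetry of the inner product.
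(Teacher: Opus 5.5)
Your proposal is correct, and it takes a genuinely different route from the paper's. The Bochner portion of your Step 2 matches the paper's proof exactly. The paper writes $e^{-\lambda|s-t|}$ as the Fourier transform of the Cauchy density $\lambda/(\pi(\lambda^2+\omega^2))$ and invokes Bochner's theorem, and it treats $\lambda=0$ as the constant kernel. It then gets $\mathcal H_\lambda$ and $\phi_\lambda$ only abstractly from positive semidefiniteness, reading off unit norm from $k_\lambda(t,t)=1$.

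The divergence is in the version you say you would present: the explicit map $\phi_\lambda(s)=\sqrt{2\lambda}\,e^{-\lambda(s-\cdot)}\mathbf 1_{(-\infty,s]}(\cdot)\in L^2(\mathbb{R})$. Your integral for $s\ge t$ is right, and $2\lambda\int_{-\infty}^{s}e^{-2\lambda(s-r)}\,dr=1$ confirms unit norm directly. This route gives positive semidefiniteness for free, since any Gram matrix of Hilbert-space vectors is positive semidefinite. It needs neither Fourier inversion nor Moore--Aronszajn. It also makes the Chronos spaces concrete, e.g.\ $\mathcal H_{\mathrm{mul}}\cong L^2(\mathbb{R};\mathbb{R}^d)$.

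The paper's spectral argument is shorter. It also extends verbatim to any stationary decay kernel with nonnegative spectral density, and it fits the random-Fourier-feature viewpoint cited in the related work.

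You were right to treat $\lambda=0$ separately. At $\lambda=0$ your explicit formula gives the zero function rather than a unit vector, so the general construction does not cover that case.
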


\begin{proof}
For $\lambda>0$, the one-dimensional Laplacian kernel has the spectral representation
$
    e^{-\lambda|s-t|}
    =
    \int_{-\infty}^{\infty}
        e^{\mathrm{i}\omega(s-t)}
        \frac{\lambda}
             {\pi(\lambda^2+\omega^2)}
    \,d\omega.
$
Its spectral density $\lambda/(\pi(\lambda^2+\omega^2))$ is nonnegative. By Bochner's theorem, $k_\lambda$ is positive semidefinite. When $\lambda=0$, the kernel reduces to the constant kernel $k_0(s,t)=1$, which is also positive semidefinite.
Positive semidefiniteness guarantees the existence of the Hilbert space $\mathcal{H}_\lambda$ and the associated feature map $\phi_\lambda$. Moreover, for every $t\in\mathbb{R}$,
$
    \|\phi_\lambda(t)\|^2
    =
    k_\lambda(t,t)
    =
    1,
$
hence $\|\phi_\lambda(t)\|=1$.
\end{proof}

The feature map $\phi_\lambda$ is used only for theoretical construction and need not be explicitly constructed or stored; all required inner products can be evaluated exactly using Equation~\eqref{eq:chronos-temporal-kernel}. 
Section~\ref{sec:chronos_metricization} builds on this representation to derive the exact TDVS metricization.
% Section~\ref{sec:chronos_metricization} combines this unit-norm temporal representation with unit-norm semantic embeddings to derive exact score-preserving metric formulations for additive and multiplicative TDVS.

\subsection{Exact TDVS Metricization}
\label{sec:chronos_metricization}

By Lemma~\ref{lem:temporal-kernel-pd}, each timestamp $t$ has a unit-norm temporal feature $\phi_\lambda(t)$, while the semantic embeddings used by Chronos are also unit-norm. In this section, for each TDVS mode, Chronos combines the unit-norm semantic embeddings with the unit-norm temporal features to map queries and data vectors into a mode-specific Hilbert space. The resulting mappings preserve the complete TDVS score and assign unit norm to every mapped query and data vector. 
These properties are formalized in Theorem~\ref{thm:chronos-metricization} (below), which shows that the mapped inner product exactly recovers the TDVS score and that score maximization is therefore equivalent to nearest-neighbor search under the induced Hilbert-space metric. We refer to this exact equivalence as \emph{TDVS metricization}.

For multiplicative TDVS, the target score is the product of semantic similarity and temporal freshness. Chronos combines each semantic embedding with its corresponding temporal feature using the tensor product:
\begin{equation}
    Q_{\mathrm{mul}}(q,\tau)
    =
    q\otimes\phi_\lambda(\tau),
    \qquad
    P_{\mathrm{mul}}(x_i,t_i)
    =
    x_i\otimes\phi_\lambda(t_i),
    \label{eq:chronos-mul-mapping}
\end{equation}
where both lie in the tensor-product space $\mathcal{H}_{\mathrm{mul}}=\mathbb{R}^d\otimes\mathcal{H}_\lambda$. 

For additive TDVS, the target score is the weighted sum of semantic similarity and temporal freshness. Chronos combines each semantic embedding with its corresponding temporal feature using a weighted direct sum:
\begin{equation}
    Q_{\mathrm{add}}(q,\tau)
    =
    \begin{bmatrix}
        \sqrt{\alpha}\,q\\
        \sqrt{1-\alpha}\,\phi_\lambda(\tau)
    \end{bmatrix},
    \,\,
    P_{\mathrm{add}}(x_i,t_i)
    =
    \begin{bmatrix}
        \sqrt{\alpha}\,x_i\\
        \sqrt{1-\alpha}\,\phi_\lambda(t_i)
    \end{bmatrix}.
    \label{eq:chronos-add-mapping}
\end{equation}
Both representations lie in the direct-sum space
$\mathcal{H}_{\mathrm{add}}=\mathbb{R}^d\oplus\mathcal{H}_\lambda$. The square-root scaling
ensures that their inner product reproduces the weights $\alpha$ and
$1-\alpha$ in the additive TDVS objective.

In the remainder of this section, all inner products, norms, and distances are taken in the mode-specific Hilbert space $\mathcal{H}_m$.
We use $d_m$ to denote the Hilbert-space metric and $D_m=d_m^2$ to denote its squared form.
Since squaring preserves nonnegative order, $d_m$ and $D_m$ induce the same nearest-neighbor ordering.

\begin{theorem}[Exact TDVS Metricization]
\label{thm:chronos-metricization}
For each mode $m\in\{\mathrm{mul},\mathrm{add}\}$ and every valid query--data pair $\tau\geq t_i$, the Chronos mappings satisfy
$
    \left\langle
        Q_m(q,\tau),
        P_m(x_i,t_i)
    \right\rangle
    =
    F_m(q,x_i,\tau)
$
and
$
    \|Q_m(q,\tau)\|
    =
    \|P_m(x_i,t_i)\|
    =
    1.
$
TDVS score maximization is equivalent to
minimizing either the Hilbert-space metric $d_m$ or its squared form $D_m$.
\end{theorem}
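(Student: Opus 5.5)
The argument is a direct computation in each mode-specific Hilbert space, using Lemma~\ref{lem:temporal-kernel-pd} together with the unit-norm embedding regime $\|q\|_2=\|x_i\|_2=1$.

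\textbf{Score preservation.} For the multiplicative mode, we use the defining property of the tensor-product inner product on $\mathbb{R}^d\otimes\mathcal H_\lambda$ for elementary tensors,
\[
\langle q\otimes\phi_\lambda(\tau),\,x_i\otimes\phi_\lambda(t_i)\rangle=\langle q,x_i\rangle\,\langle\phi_\lambda(\tau),\phi_\lambda(t_i)\rangle .
\]
By Lemma~\ref{lem:temporal-kernel-pd}, the second factor equals $k_\lambda(\tau,t_i)$. Because the pair is valid ($\tau\ge t_i$), this equals $e^{-\lambda(\tau-t_i)}=w(\Delta_i(\tau))$, so the product is $F_{\mathrm{mul}}$. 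For the additive mode, the direct-sum inner product splits blockwise:
\[
\alpha\langle q,x_i\rangle+(1-\alpha)\,k_\lambda(\tau,t_i)=F_{\mathrm{add}}.
\]
This uses $\alpha\in[0,1]$, so that the square roots are real.

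\textbf{Unit norm.} In the multiplicative mode, $\|a\otimes b\|=\|a\|\,\|b\|=1\cdot1$. In the additive mode,
\[
\|P_{\mathrm{add}}\|^2=\alpha\|x_i\|^2+(1-\alpha)\|\phi_\lambda(t_i)\|^2=\alpha+(1-\alpha)=1,
\]
and the same computation applies to $Q_{\mathrm{add}}$. This is why unit-norm semantic embeddings are required: the semantic norm enters multiplicatively or as a convex weight, and only norm $1$ gives total norm $1$.

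\textbf{Equivalence to nearest-neighbor search.} We expand the squared distance:
\[
D_m(Q,P)=\|Q\|^2+\|P\|^2-2\langle Q,P\rangle=2-2F_m(q,x_i,\tau).
\]
This is a strictly decreasing affine function of $F_m$, so maximizing $F_m$ over the dataset is the same as minimizing $D_m$, with ties preserved. Since $d_m=\sqrt{D_m}$ and the square root is strictly monotone on $[0,\infty)$, minimizing $d_m$ gives the same ordering as well.

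\textbf{Main obstacle.} The one step that needs care is justifying the tensor-product and direct-sum constructions when $\mathcal H_\lambda$ is infinite-dimensional. This means relying on the standard Hilbert tensor-product identity $\langle a\otimes b,c\otimes d\rangle=\langle a,c\rangle\langle b,d\rangle$ and on completeness of $\mathbb{R}^d\otimes\mathcal H_\lambda$, which holds because $\mathbb{R}^d$ is finite-dimensional. Once these are in place, every remaining step is a direct computation.
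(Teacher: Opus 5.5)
Your proposal is correct and follows essentially the same route as the paper's proof. Both use the elementary-tensor identity for the multiplicative mode, the blockwise direct-sum inner product for the additive mode, and the unit norms of $q$, $x_i$, and $\phi_\lambda(\cdot)$. Both then expand $D_m=\|Q_m\|^2+\|P_m\|^2-2\langle Q_m,P_m\rangle=2-2F_m$. Your remarks on the infinite-dimensional tensor product and on the strict monotonicity of $\sqrt{\cdot}$ only spell out details the paper leaves implicit.
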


\begin{proof}
For multiplicative TDVS, applying the tensor-product identity and Equation~\eqref{eq:temporal-feature-map} gives
$
    \left\langle Q_{\mathrm{mul}},P_{\mathrm{mul}}\right\rangle
    =
    \langle q,x_i\rangle
    \left\langle
        \phi_\lambda(\tau),\phi_\lambda(t_i)
    \right\rangle
    =
    \langle q,x_i\rangle e^{-\lambda(\tau-t_i)}
    =
    F_{\mathrm{mul}}(q,x_i,\tau).
$
For additive TDVS,
$
    \left\langle Q_{\mathrm{add}},P_{\mathrm{add}}\right\rangle
    =
    \alpha\langle q,x_i\rangle
    +(1-\alpha)
    \left\langle
        \phi_\lambda(\tau),\phi_\lambda(t_i)
    \right\rangle
    =
    \alpha\langle q,x_i\rangle
    +(1-\alpha)e^{-\lambda(\tau-t_i)}
    =
    F_{\mathrm{add}}(q,x_i,\tau).
$
Since the semantic embeddings and temporal features are unit-norm, the tensor-product norm gives
$\|Q_{\mathrm{mul}}\|=\|P_{\mathrm{mul}}\|=1$, while the direct-sum construction gives
$
    \|Q_{\mathrm{add}}\|^2
    =
    \|P_{\mathrm{add}}\|^2
    =
    \alpha+(1-\alpha)
    =
    1.
$
Finally, expanding the squared distance for either mode gives
\begin{equation}
    D_m(q,x_i)
    =
    \left(d_m(q,x_i)\right)^2
    =
    \left\|
    Q_m-P_m
    \right\|^2
    =
    2-2F_m(q,x_i,\tau).
    \label{eq:chronos-query-data-distance}
\end{equation}
Thus, minimizing the Hilbert-space distance $d_m$ or its squared form $D_m$ is exactly equivalent to maximizing the TDVS score.
\end{proof}

\paragraph{Base data geometry.}
Beyond guaranteeing exact query--data score preservation, the metricization also determines the pairwise distances among indexed data vectors, thereby defining the base data geometry used for index construction.

For two data vectors $(x_i,t_i)$ and $(x_j,t_j)$, let
$c_{ij}=\langle x_i,x_j\rangle$ denote their semantic similarity and
let $k_{ij}=k_\lambda(t_i,t_j)=e^{-\lambda|t_i-t_j|}$ denote their temporal affinity.
For multiplicative TDVS, the squared distance between the mapped data representations in the corresponding Hilbert space is
\begin{align}
    D_{\mathrm{mul}}^{\mathrm{base}}(x_i,x_j)
    &=
    \left\|
        P_{\mathrm{mul}}(x_i,t_i)
        -
        P_{\mathrm{mul}}(x_j,t_j)
    \right\|^2
    \nonumber\\
    &=
    2(1-k_{ij})
    +
    2k_{ij}(1-c_{ij}).
    \label{eq:chronos-base-mul-dd}
\end{align}
For additive TDVS, the corresponding squared distance is
\begin{align}
    D_{\mathrm{add}}^{\mathrm{base}}(x_i,x_j)
    &=
    \left\|
        P_{\mathrm{add}}(x_i,t_i)
        -
        P_{\mathrm{add}}(x_j,t_j)
    \right\|^2
    \nonumber\\
    &=
    2(1-\alpha)(1-k_{ij})
    +
    2\alpha(1-c_{ij}).
    \label{eq:chronos-base-add-dd}
\end{align}
These expressions show how the base geometry combines semantic similarity and temporal affinity under the two TDVS modes. In the additive case, the two contributions are explicitly weighted by $\alpha$ and $1-\alpha$. In the multiplicative case, the semantic component $2k_{ij}(1-c_{ij})$ is coupled to the temporal affinity $k_{ij}$. As the temporal separation $|t_i-t_j|$ increases, $k_{ij}$ decreases, gradually reducing the influence of semantic similarity. When $k_{ij}$ approaches zero, the squared distance approaches $2$ regardless of $c_{ij}$, weakening the connections between semantically related vectors with distant timestamps. Thus, the base multiplicative geometry naturally favors temporal locality, but does not provide an independent mechanism for controlling cross-time semantic connectivity. Section~\ref{sec:timelift} introduces a score-preserving geometry adjustment that provides such control without changing any query--data TDVS score.

\subsection{Query-Orthogonal TimeLift}
\label{sec:timelift}

To independently adjust the influence of semantic similarity on data--data geometry without changing the TDVS objective, Chronos introduces a query-orthogonal geometry adjustment called \emph{TimeLift}. The key idea is to append an additional semantic component to each data representation, while appending a zero component to every query representation. The added component therefore affects data--data distances but contributes nothing to query--data scores.
Theorem~\ref{thm:timelift} (below) formalizes the key query-equivalence property of TimeLift: for any fixed $\kappa$, it changes the data--data geometry while preserving all query--data TDVS scores and rankings.

Let $\mathcal{H}_m$ denote the mode-specific Hilbert space defined in Section~\ref{sec:chronos_metricization}. For a geometry parameter $\kappa\geq0$, TimeLift extends this space to $\widehat{\mathcal{H}}_m=\mathcal{H}_m\oplus\mathbb{R}^d$.

\begin{definition}[Query-Orthogonal TimeLift]
\label{def:timelift}
For each TDVS mode $m\in\{\mathrm{mul},\mathrm{add}\}$, TimeLift
maps a query $(q,\tau)$ and a timestamped data vector $(x_i,t_i)$ to
\begin{equation}
    \widehat{Q}_m^{(\kappa)}(q,\tau)
    =
    \begin{bmatrix}
        Q_m(q,\tau)\\
        \mathbf{0}_d
    \end{bmatrix},
    \qquad
    \widehat{P}_m^{(\kappa)}(x_i,t_i)
    =
    \begin{bmatrix}
        P_m(x_i,t_i)\\
        \sqrt{\kappa}\,x_i
    \end{bmatrix},
    \label{eq:timelift-mapping}
\end{equation}
where $\mathbf{0}_d$ is the zero vector in $\mathbb{R}^d$.
\end{definition}

The appended semantic component $\sqrt{\kappa}x_i$ is
query-orthogonal because the corresponding component of $\widehat{Q}_m^{(\kappa)}(q,\tau)$ is $\mathbf{0}_d$.
Let $d_m^{(\kappa)}$ be the TimeLift metric and
$D_m^{(\kappa)}=(d_m^{(\kappa)})^2$ be its squared form.

\begin{theorem}[Query-Equivalent TimeLift]
\label{thm:timelift}
For every $\kappa\geq0$ and each TDVS mode
$m\in\{\mathrm{mul},\mathrm{add}\}$, TimeLift preserves all
query--data TDVS scores. Moreover, every mapped query has unit norm,
whereas all mapped data vectors have the same norm
$\sqrt{1+\kappa}$. For any fixed $\kappa$, nearest-neighbor search using either $d_m^{(\kappa)}$ or its squared form $D_m^{(\kappa)}$ preserves the exact TDVS ranking.
\end{theorem}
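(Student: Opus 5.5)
The plan is a direct computation in the extended space $\widehat{\mathcal H}_m=\mathcal H_m\oplus\mathbb R^d$, reusing Theorem~\ref{thm:chronos-metricization} for the $\mathcal H_m$ block. Inner products and squared norms in a direct sum add componentwise, so every quantity splits into an $\mathcal H_m$ part, already handled, plus an $\mathbb R^d$ part that is trivial.

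\textbf{Score preservation.} Expanding the inner product by blocks gives
$\langle \widehat Q_m^{(\kappa)},\widehat P_m^{(\kappa)}\rangle=\langle Q_m,P_m\rangle+\langle \mathbf 0_d,\sqrt{\kappa}\,x_i\rangle=\langle Q_m,P_m\rangle$.
By Theorem~\ref{thm:chronos-metricization}, the right-hand side equals $F_m(q,x_i,\tau)$ for every valid pair $\tau\ge t_i$. This holds for every $\kappa\ge0$.

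\textbf{Norms.} For the query, $\|\widehat Q_m^{(\kappa)}\|^2=\|Q_m\|^2+0=1$. For the data, $\|\widehat P_m^{(\kappa)}\|^2=\|P_m\|^2+\kappa\|x_i\|^2=1+\kappa$, using $\|P_m\|=1$ from Theorem~\ref{thm:chronos-metricization} and the unit-norm embedding regime $\|x_i\|=1$. Hence every mapped data vector has norm $\sqrt{1+\kappa}$.

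\textbf{Ranking equivalence.} This is the step that needs care, though it is routine. Expanding the squared distance,
\[
D_m^{(\kappa)}(q,x_i)=\|\widehat Q_m^{(\kappa)}\|^2+\|\widehat P_m^{(\kappa)}\|^2-2\langle\widehat Q_m^{(\kappa)},\widehat P_m^{(\kappa)}\rangle=2+\kappa-2F_m(q,x_i,\tau).
\]
The point to stress is that a distance-based ranking is score-preserving only because the data norms are \emph{equal} across all $i$. With equal norms, the additive constant $2+\kappa$ is the same for every candidate once $q,\tau,\kappa$ are fixed. So $D_m^{(\kappa)}$ is a strictly decreasing affine function of $F_m$. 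Since $d_m^{(\kappa)}=\sqrt{D_m^{(\kappa)}}$ is a monotone transform of a nonnegative quantity, it induces the same order. Consequently the nearest neighbors under either $d_m^{(\kappa)}$ or $D_m^{(\kappa)}$ coincide exactly with the TDVS top-$k$, including tie structure.

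I would close by remarking that the argument requires $\kappa$ to be fixed within a single search. Mixing values of $\kappa$ across candidates would break the common constant $2+\kappa$. This motivates keeping one $\kappa_\ell$ per layer when TimeLift is used in TANGO.
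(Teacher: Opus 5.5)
Your proposal is correct and follows the paper's proof essentially line for line. Like the paper, you expand the inner product and the norms block-wise in $\mathcal H_m\oplus\mathbb R^d$, using Theorem~\ref{thm:chronos-metricization} and $\|x_i\|=1$, and then use the identity $D_m^{(\kappa)}=2+\kappa-2F_m$ to get ranking equivalence for fixed $\kappa$. Your remarks on the equal data norms and on the monotonicity of the square root just make explicit what the paper leaves implicit.
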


\begin{proof}
By Definition~\ref{def:timelift} and
Theorem~\ref{thm:chronos-metricization},
\begin{equation}
    \left\langle
        \widehat{Q}_m^{(\kappa)},
        \widehat{P}_m^{(\kappa)}
    \right\rangle
    =
    \langle Q_m,P_m\rangle
    +
    \left\langle
        \mathbf{0}_d,\sqrt{\kappa}x_i
    \right\rangle
    =
    F_m(q,x_i,\tau).
\end{equation}
Since $\|Q_m\|=\|P_m\|=\|x_i\|=1$, the mapped query and data norms
are $1$ and $\sqrt{1+\kappa}$, respectively. Hence,
\begin{equation}
    D_m^{(\kappa)}(q,x_i)
    =
    \left(d_m^{(\kappa)}(q,x_i)\right)^2
    =
    \left\|
    \widehat{Q}_m^{(\kappa)}
    -
    \widehat{P}_m^{(\kappa)}
    \right\|^2
    =
    2+\kappa-2F_m(q,x_i,\tau).
    \label{eq:chronos-query-data-distance-kappa}
\end{equation}
For any fixed $\kappa$, minimizing either $d_m^{(\kappa)}$ or its squared form $D_m^{(\kappa)}$ is exactly equivalent to maximizing the TDVS score.
\end{proof}

\paragraph{Controllable data geometry.}
Although $\kappa$ does not affect any query--data score, it changes the pairwise distances among indexed data vectors. Using $c_{ij}$ and $k_{ij}$ defined in Section~\ref{sec:chronos_metricization}, the squared TimeLift distance for multiplicative TDVS is
\begin{align}
    D_{\mathrm{mul}}^{(\kappa)}(x_i,x_j)
    &=
    \left\|
        \widehat{P}_{\mathrm{mul}}^{(\kappa)}(x_i,t_i)
        -
        \widehat{P}_{\mathrm{mul}}^{(\kappa)}(x_j,t_j)
    \right\|^2
    \nonumber\\
    &=
    2(1-k_{ij})
    +
    2(k_{ij}+\kappa)(1-c_{ij}).
    \label{eq:timelift-mul-dd}
\end{align}
For additive TDVS, the corresponding squared distance is
\begin{align}
    D_{\mathrm{add}}^{(\kappa)}(x_i,x_j)
    &=
    \left\|
        \widehat{P}_{\mathrm{add}}^{(\kappa)}(x_i,t_i)
        -
        \widehat{P}_{\mathrm{add}}^{(\kappa)}(x_j,t_j)
    \right\|^2
    \nonumber\\
    &=
    2(1-\alpha)(1-k_{ij})
    +
    2(\alpha+\kappa)(1-c_{ij}).
    \label{eq:timelift-add-dd}
\end{align}

Compared with the base data geometry, TimeLift introduces a timestamp-independent semantic term $2\kappa(1-c_{ij})$. In the multiplicative case, the coefficient of semantic separation increases from $k_{ij}$ to $k_{ij}+\kappa$. Thus, even when two timestamps are far apart and $k_{ij}$ approaches zero, their squared distance still depends on semantic similarity through $2\kappa(1-c_{ij})$. In the additive case, TimeLift similarly adjusts the semantic contribution from $\alpha$ to $\alpha+\kappa$ in the data geometry, without changing the weight $\alpha$ in the TDVS objective.

The parameter $\kappa$ controls the indexing geometry: $\kappa=0$ recovers the base geometry, while larger values strengthen the influence of semantic similarity and connectivity across distant timestamps.
Because all $\kappa$ values preserve the exact TDVS ranking while inducing distinct data--data geometries, Chronos defines a query-equivalent but geometry-distinct family.
However, when constructing an index, using one global $\kappa$ forces a trade-off between temporal locality and cross-time semantic connectivity. TANGO (Section~\ref{sec:tango}) addresses this trade-off by assigning layer-specific $\kappa_\ell$ values within a hierarchical graph while preserving the same TDVS ordering across all layers.

\subsection{Efficient Realization}
\label{sec:chronos_realization}

Chronos uses the Hilbert-space mappings only for theoretical construction and does not explicitly construct or store the temporal features $\phi_\lambda(t)$, the tensor-product or direct-sum representations $Q_m(q,\tau)$ and $P_m(x_i,t_i)$, or their TimeLift extensions $\widehat{Q}_m^{(\kappa)}(q,\tau)$ and $\widehat{P}_m^{(\kappa)}(x_i,t_i)$.
This is because Chronos derives exact formulas for the pairwise distance required by query processing (query--data) and index construction (data--data) in the mapped Hilbert spaces, allowing the score-preserving and geometry-controllable Hilbert-space geometry to be realized directly from the original embeddings and timestamps without materializing the mapped representations.

Specifically, during query processing, Chronos directly evaluates the original TDVS score $F_m(q,x_i,\tau)$. Equivalently, for any fixed $\kappa$, the same candidate ordering can be expressed through the query--data squared distance $D_m^{(\kappa)}(q,x_i)$ in Equation~\eqref{eq:chronos-query-data-distance-kappa}.
During index construction, Chronos computes data--data squared distances using $D_{\mathrm{mul}}^{(\kappa)}(x_i,x_j)$ and $D_{\mathrm{add}}^{(\kappa)}(x_i,x_j)$ in Equations~\eqref{eq:timelift-mul-dd}--\eqref{eq:timelift-add-dd}.

The TANGO graph index introduced in Section~\ref{sec:tango} uses these data--data squared distances to construct its graph neighborhoods, while using the exact TDVS score to guide query traversal.

\begin{figure}[t]
    \centering
    \includegraphics[width=0.88\linewidth]
    {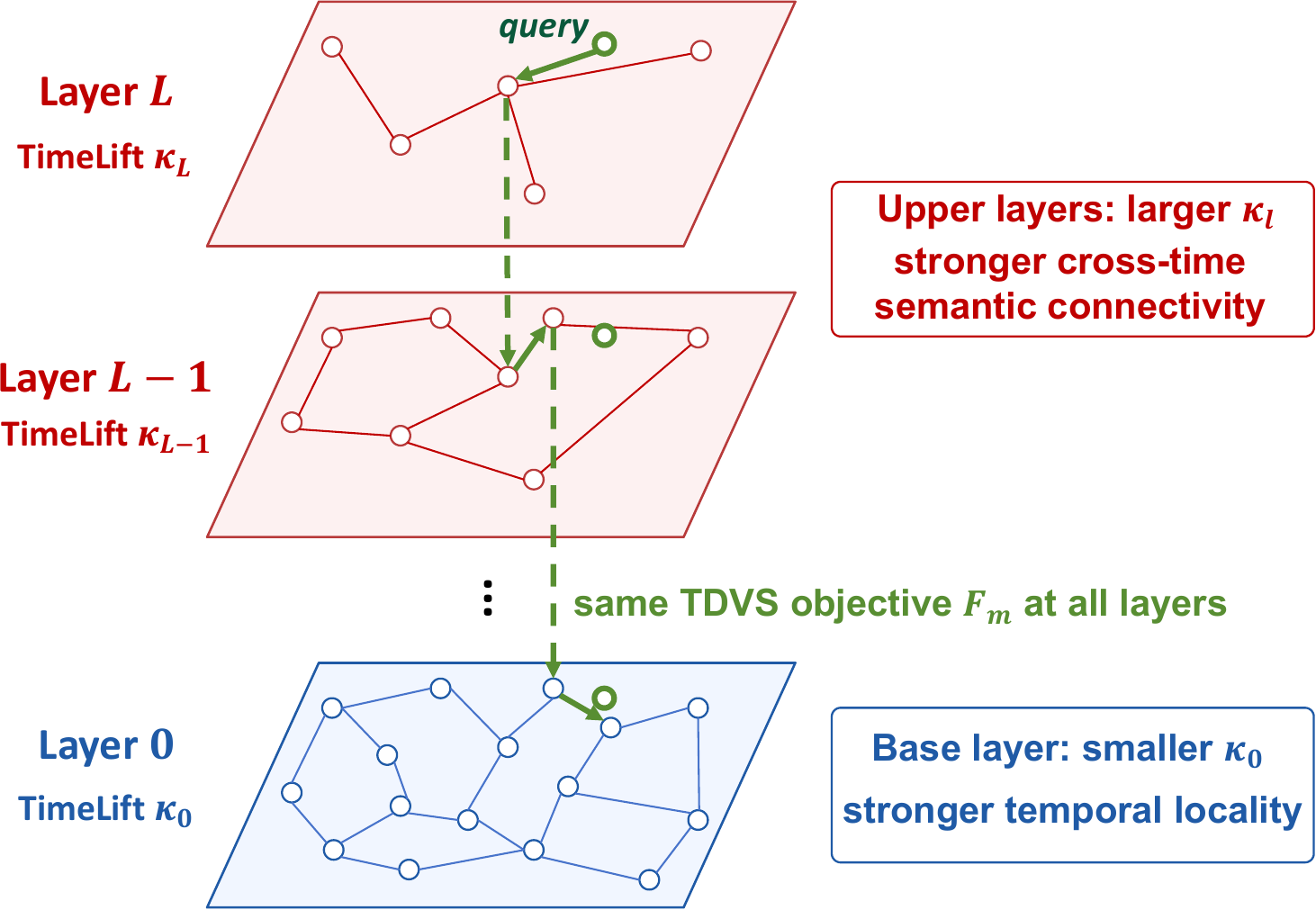}
    \caption{TANGO Hierarchical Index.}
    \label{fig:tango-hierarchy}
\end{figure}

\section{TANGO: A TDVS-Native Graph Index}
\label{sec:tango}

\paragraph{Overview.}
Chronos exposes a family of query-equivalent but geometry-distinct metric spaces parameterized by $\kappa$. However, a single value of $\kappa$ defines a fixed geometry that is difficult to simultaneously support long-range navigation and
local candidate refinement.
In this section, we propose \emph{TANGO}, short for \emph{\underline{T}ime-\underline{A}ware \underline{N}avigable \underline{G}raph with Query-\underline{O}rthogonal TimeLift}, a TDVS-native hierarchical graph index that realizes different Chronos geometries across graph layers while preserving a common query objective.

TANGO follows the hierarchical organization of HNSW~\cite{malkov2018efficient}.
As illustrated in Figure~\ref{fig:tango-hierarchy}, TANGO's key distinction from HNSW is that each layer $\ell$ is constructed under a distinct TimeLift geometry parameterized by $\kappa_\ell$: larger values in upper layers strengthen cross-time semantic connectivity, while the smaller $\kappa_0$ at the base layer preserves temporal locality. 
This layer specialization allows TANGO to navigate rapidly across temporally distant regions in the upper layers and then quickly focus on locally relevant semantic--temporal neighborhoods at the base layer. 
Importantly, layer-specific geometries introduce no objective mismatch: $\kappa_\ell$ affects only the data--data geometry used for graph construction, while query--data ordering remains exactly the TDVS ordering. 
Consequently, all layers can be searched using the same exact TDVS score $F_m(q,x_i,\tau)$ within a single top-down traversal.
Moreover, with temporal factors cached and shared across layers, TANGO avoids repeated exponential computations for temporal decay and requires only one semantic inner product and constant-time scalar operations per visited candidate, enabling efficient query processing.

\subsection{Layer-Wise Graph Geometry}
\label{sec:tango_geometry}

Let
$
    \mathcal{G}
    =
    \left\{
        \mathcal{G}_{\ell}
        =
        (\mathcal{V}_{\ell},\mathcal{E}_{\ell})
    \right\}_{\ell=0}^{L}
$
denote the TANGO hierarchy, where $\ell=0$ is the base layer and $\mathcal{V}_{\ell+1}\subseteq\mathcal{V}_{\ell}$. As in HNSW, each data vector is assigned a maximum level, and higher layers contain progressively fewer vectors. TANGO assigns a separate nonnegative TimeLift parameter to each graph layer:
\begin{equation}
    \boldsymbol{\kappa}
    =
    (\kappa_0,\kappa_1,\ldots,\kappa_L),
    \qquad
    \kappa_\ell\geq0.
    \label{eq:tango-layerwise-kappa}
\end{equation}
A larger $\kappa_\ell$ gives semantic similarity greater influence over the data geometry of layer $\ell$, whereas a smaller $\kappa_\ell$ preserves stronger temporal locality.
Within TANGO's multi-layer hierarchy, sparse upper layers benefit from stronger cross-time semantic
connectivity for long-range navigation, whereas lower layers benefit from greater temporal locality for fine-grained refinement.
Thus, TANGO adopts a nondecreasing schedule:
$0\leq\kappa_0\leq\kappa_1\leq\cdots\leq\kappa_L.$

From Equation~\eqref{eq:chronos-query-data-distance-kappa}, $D_m^{(\kappa_\ell)}(q,x_i)
=2+\kappa_\ell-2F_m(q,x_i,\tau)$. Hence, different values of $\kappa_\ell$ add different layer-dependent
constants to the squared query--data distances, so their numerical values are not directly comparable across layers. Nevertheless, the candidate ordering remains identical across all layers. TANGO therefore uses the common TDVS score $F_m$ throughout query traversal, while $\kappa_\ell$ determines only the graph neighborhoods
constructed at layer $\ell$.

At layer $\ell$, all data--data comparisons are based on $D_m^{(\kappa_\ell)}(x_i,x_j)$, i.e., the multiplicative or additive TimeLift distance defined in Equations~\eqref{eq:timelift-mul-dd} and~\eqref{eq:timelift-add-dd}, respectively. Since graph construction depends only on distance ordering, the squared distance can be used without computing its square root.

\begin{algorithm}[tb]
    % \small
    \scriptsize
    \caption{TANGO Layer Search}
    \label{alg:tango-layer-search}
    \LinesNumbered
    \KwIn{Layer $\mathcal G_\ell$, TDVS mode
$m\in\{\mathrm{mul},\mathrm{add}\}$, search budget $ef$, cached data factors $\{b_i\}$, phase $r\in\{\mathrm{build},\mathrm{query}\}$, entry set $\mathcal E$;
    vector to be inserted $v=(x_v,t_v)$ and layer parameter
    $\kappa_\ell$ if $r=\mathrm{build}$;
    query $(q,\tau)$ and query factor $a_\tau$ if
    $r=\mathrm{query}$}
    \KwOut{Candidate set $\mathcal W$}

    \eIf{$r=\mathrm{build}$}{
        For any vertex $u\in\mathcal V_\ell$, define the smaller-is-better key
        $\delta_r(u)\leftarrow
        D_m^{(\kappa_\ell)}(x_v,x_u)$, where
        $k_{vu}\leftarrow
        \frac{\min\{b_v,b_u\}}{\max\{b_v,b_u\}}$\;
    }{
        For any vertex $u\in\mathcal V_\ell$, define the
        smaller-is-better key
        $\delta_r(u)\leftarrow
        -F_m(q,x_u,\tau)$, where $e^{-\lambda(\tau-t_u)} \leftarrow a_\tau b_u$\;
    }

    Initialize $\mathcal C$ (a min-priority queue of unexpanded candidates) and $\mathcal W$ (a max-priority queue of the current best vertices) with $\mathcal E$, both ordered by $\delta_r$; set the visited set $\mathcal Z\leftarrow\mathcal E$\;

    \While{$\mathcal C\neq\emptyset$}{
        $c\leftarrow$ remove the vertex with the smallest
        $\delta_r$ from $\mathcal C$\;
        $w\leftarrow$ the vertex with the largest
        $\delta_r$ in $\mathcal W$\;

        \If{$|\mathcal W|\geq ef$ and
        $\delta_r(c)>\delta_r(w)$}{
            \textbf{break}\;
        }

        \ForEach{$u\in\mathcal N_\ell(c)$}{
            \If{$u\notin\mathcal Z$}{
                $\mathcal Z\leftarrow\mathcal Z\cup\{u\}$\;
                Compute $\delta_r(u)$ using the
                phase-specific key\;

                \If{$|\mathcal W|<ef$ or
$\delta_r(u)<\max_{z\in\mathcal W}\delta_r(z)$}{
                    Insert $u$ into $\mathcal C$ and
                    $\mathcal W$\;

                    \If{$|\mathcal W|>ef$}{
                        Remove the largest-$\delta_r$ vertex from $\mathcal W$\;
                    }
                }
            }
        }
    }

    \Return{$\mathcal W$}\;
\end{algorithm}

\paragraph{Layer-shared temporal factor caching.}
During index construction and query traversal, TANGO would repeatedly compute the data--data temporal affinity $e^{-\lambda|t_i-t_j|}$ and the query--data decay $e^{-\lambda(\tau-t_i)}$, causing nontrivial computational overhead.
Since these terms depend only on timestamps and the fixed decay rate $\lambda$, they can be factorized into per-data and per-query temporal factors and cached for reuse. 
Moreover, these cached factors are independent of the layer-specific parameter $\kappa_\ell$, allowing the same values to be shared across all graph layers.

Let $t_0$ be a reference timestamp chosen for numerical stability; TANGO sets it to the latest timestamp when the index is initialized. It caches one per-data temporal factor $b_i$ for each indexed vector and computes one per-query temporal factor $a_\tau$ for each query:
\begin{equation}
    b_i
    =
    e^{\lambda(t_i-t_0)},
    \qquad
    a_\tau
    =
    e^{-\lambda(\tau-t_0)}.
    \label{eq:tango-cached-temporal-factors}
\end{equation}
The required query--data decay and data--data temporal affinity are then computed exactly as
\begin{equation}
    e^{-\lambda(\tau-t_i)}
    =
    a_\tau b_i,
    \qquad
    e^{-\lambda|t_i-t_j|}
    =
    \frac{
        \min\{b_i,b_j\}
    }{
        \max\{b_i,b_j\}
    }.
    \label{eq:tango-cached-temporal-evaluation}
\end{equation}
Caching these temporal factors eliminates repeated exponentiation from pairwise comparisons: TANGO performs one exponential computation per inserted vector and one per query, while subsequent temporal terms are obtained using only scalar multiplication or division and are shared across all graph layers. 
Unlike the temporal anchor $T$ in STR, $t_0$ is only a numerical reference and does not alter any retrieval score, data geometry, or graph edge. 
As time advances, although keeping $t_0$ fixed does not affect correctness, the numerical scale of the cached factors may gradually drift.
TANGO can then replace $t_0$ with a newer reference $t_0'$ and uniformly rescale each cached factor as
$b_i\leftarrow e^{-\lambda(t_0'-t_0)}b_i$, without recomputing pairwise distances or rebuilding the graph.

\begin{algorithm}[tb]
    % \small
    \scriptsize
    \caption{Index Construction and Online Insertion}
    \label{alg:tango-insertion}
    \LinesNumbered
    \KwIn{TANGO hierarchy
    $\mathcal G=\{\mathcal G_\ell\}_{\ell=0}^{L}$, layer parameters $\boldsymbol{\kappa}$,
    TDVS mode $m\in\{\mathrm{mul},\mathrm{add}\}$,
    vector to be inserted $v=(x_v,t_v)$, decay rate $\lambda$, reference timestamp $t_0$,
    cached data factors $\{b_i\}$, 
     budget $\mathit{efConstruction}$,
    and degree limits $\{M_\ell\}$}
    \KwOut{Updated hierarchy $\mathcal G$ and cached factors
    $\{b_i\}$}

    Compute and cache
    $b_v\leftarrow e^{\lambda(t_v-t_0)}$
    for reuse across all layers\;
    Sample the maximum level $L_v$ using the HNSW level
    distribution\;

    \If{$\mathcal G$ is empty}{
        \For{$\ell=0$ to $L_v$}{
            Add $v$ to $\mathcal V_\ell$\;
        }
        Set $v$ as the global entry point and
        $L\leftarrow L_v$\;
        \Return{$(\mathcal G,\{b_i\})$}\;
    }

    Set the entry set
    $\mathcal E\leftarrow\{\text{global entry point}\}$\;

    \For{$\ell=L$ down to $L_v+1$}{
        $\mathcal E\leftarrow$
        \textbf{call}
        \emph{TANGO Layer Search}
        $(\mathcal G_\ell,m,1,\{b_i\},
        \mathrm{build},\mathcal E,v,\kappa_\ell)$\;
    }

    \For{$\ell=\min\{L,L_v\}$ down to $0$}{
        $\mathcal W_\ell\leftarrow$
        \textbf{call}
        \emph{TANGO Layer Search}
        $(\mathcal G_\ell,m,\mathit{efConstruction},
        \{b_i\},\mathrm{build},\mathcal E,
        v,\kappa_\ell)$\;

        $\mathcal S_\ell\leftarrow$
        \textbf{call}
        \emph{Diversified Neighbor Selection}
        $(v,\mathcal W_\ell,M_\ell,
        D_m^{(\kappa_\ell)}(\cdot,\cdot),\{b_i\})$\;

        Add $v$ to $\mathcal V_\ell$ and set
        $\mathcal N_\ell(v)\leftarrow\mathcal S_\ell$\;

        \ForEach{$u\in\mathcal S_\ell$}{
            Add the reverse edge
            $\mathcal N_\ell(u)\leftarrow
            \mathcal N_\ell(u)\cup\{v\}$\;

            \If{$|\mathcal N_\ell(u)|>M_\ell$}{
                $\mathcal N_\ell(u)\leftarrow$
                \textbf{call}
                \emph{Diversified Neighbor Selection}
                $(u,\mathcal N_\ell(u),M_\ell,
                D_m^{(\kappa_\ell)}(\cdot,\cdot),\{b_i\})$\;
            }
        }

        $\mathcal E\leftarrow\mathcal W_\ell$\;
    }

    \If{$L_v>L$}{
        \For{$\ell=L+1$ to $L_v$}{
            Add $v$ to $\mathcal V_\ell$\;
        }
        Set $v$ as the global entry point and
        $L\leftarrow L_v$\;
    }

    \Return{$(\mathcal G,\{b_i\})$}\;
\end{algorithm}

\subsection{Index Construction and Online Insertion}
\label{sec:tango_construction}

TANGO retains the hierarchical organization, level assignment, and bounded-degree structure of HNSW, but refines the proximity computations used to construct the graph.
Specifically, every distance-based decision at layer $\ell$ uses the layer-specific squared TimeLift distance
$D_m^{(\kappa_\ell)}(x_i,x_j)$ (Equations~\eqref{eq:timelift-mul-dd}--\eqref{eq:timelift-add-dd}). The temporal affinity required by this distance is computed from the cached factors as
$
    k_{ij}=e^{-\lambda|t_i-t_j|}
    =
    \frac{\min\{b_i,b_j\}}{\max\{b_i,b_j\}}.
$
Consequently, TANGO retains HNSW's efficient hierarchical navigation while natively incorporating the semantic--temporal geometry of TDVS, thereby supporting efficient index construction and query processing.

\paragraph{TANGO layer search.}
Algorithm~\ref{alg:tango-layer-search} presents the common layer-search procedure used during both construction and query processing. The two phases share the same best-first queue operations but use different ranking keys. During construction, vertices at layer $\ell$ are ordered by the layer-specific squared TimeLift distance $D_m^{(\kappa_\ell)}(x_v,x_u)$, whose temporal affinity is obtained from $b_v$ and $b_u$. During query processing, vertices are instead ordered by the negative exact TDVS score
$-F_m(q,x_u,\tau)$, with the query--data decay evaluated as $a_\tau b_u$.

The shared search procedure therefore does not impose a common proximity measure on the two phases. The construction key determines the neighborhoods and topology of each layer, whereas the query key directly follows the target TDVS objective. In both phases, temporal factor caching eliminates repeated exponentiation during graph traversal.

\paragraph{Index construction.}
Algorithm~\ref{alg:tango-insertion} presents the complete index construction procedure.
To insert a timestamped vector $v=(x_v,t_v)$, TANGO computes and caches its temporal factor $b_v=e^{\lambda(t_v-t_0)}$ once and reuses it across all graph layers.
TANGO retains HNSW's hierarchical insertion and diversified neighbor selection mechanisms, but replaces their underlying proximity measure with the TimeLift geometry of the current layer.
During the top-down traversal and candidate exploration, every data--data comparison at layer $\ell$ is evaluated using the layer-specific squared TimeLift distance $D_m^{(\kappa_\ell)}(x_i,x_j)$, with the temporal affinity obtained directly from the cached factors.

% TANGO's construction differs from standard HNSW in three key aspects. First, each layer is constructed under its own geometry parameter $\kappa_\ell$, enabling layer-wise semantic--temporal neighborhoods. Second, all distance-based construction decisions within a layer consistently use the same squared TimeLift distance $D_m^{(\kappa_\ell)}$. Third, its temporal affinity is evaluated from the cached factors $b_i$ and $b_j$, eliminating repeated exponentiation from data--data comparisons.

\paragraph{Online insertion.}
The same procedure directly supports online insertion. For each newly arriving vector, TANGO computes its cached temporal factor $b_v$ and performs the corresponding local searches, edge insertions, and adjacency-list pruning. Since
$D_m^{(\kappa_\ell)}(x_i,x_j)$ depends on the relative timestamp difference $|t_i-t_j|$, inserting a vector leaves existing data--data distances unchanged and requires \emph{no} global updates to previously indexed vectors. 
TANGO remains temporally stable under subsequent insertions.

\begin{algorithm}[tb]
    % \small
    \scriptsize
    \caption{Query Processing}
    \label{alg:tango-query}
    \LinesNumbered
    \KwIn{TANGO hierarchy
    $\mathcal G=\{\mathcal G_\ell\}_{\ell=0}^{L}$, query $(q,\tau)$, decay rate $\lambda$,
    TDVS mode $m\in\{\mathrm{mul},\mathrm{add}\}$, 
    reference timestamp $t_0$, cached data factors $\{b_i\}$, result size $k$, budget $\mathit{efSearch}\geq k$}
    \KwOut{Approximate TDVS top-$k$ result
    $\widehat{\mathcal N}_k$}

    Compute the query factor
    $a_\tau\leftarrow e^{-\lambda(\tau-t_0)}$
    for reuse across layers\;
    Set the entry set
    $\mathcal E\leftarrow\{\text{global entry point}\}$\;

    \For{$\ell=L$ down to $1$}{
        $\mathcal E\leftarrow$
        \textbf{call}
        \emph{TANGO Layer Search}
        $(\mathcal G_\ell,m,1,\{b_i\},
        \mathrm{query},\mathcal E,
        (q,\tau),a_\tau)$\;
    }

    $\mathcal W_0\leftarrow$
    \textbf{call}
    \emph{TANGO Layer Search}
    $(\mathcal G_0,m,\mathit{efSearch},
    \{b_i\},\mathrm{query},\mathcal E,
    (q,\tau),a_\tau)$\;

    $\widehat{\mathcal N}_k
    \leftarrow
    \operatorname{TopK}_{u\in\mathcal W_0}^{k}
    F_m(q,x_u,\tau)$\;

    \Return{$\widehat{\mathcal N}_k$}\;
\end{algorithm}

\subsection{Query Processing}
\label{sec:tango_query}

The layer-specific parameters $\kappa_\ell$ affect the query result through the graph topology constructed at each layer, but they are deliberately absent from the query-time scoring function. As shown in Section~\ref{sec:tango_geometry}, changing $\kappa_\ell$ adds the same constant to all squared
query--data distances within layer $\ell$ and therefore does not change candidate ordering. TANGO consequently traverses every layer using the same exact TDVS score $F_m(q,x_i,\tau)$.

Algorithm~\ref{alg:tango-query} presents the query processing procedure. TANGO computes the query factor
$a_\tau=e^{-\lambda(\tau-t_0)}$ once and reuses it across graph layers during hierarchical traversal. For each visited vertex $u$, the query--data decay is obtained as $a_\tau b_u$, and the smaller-is-better key is defined as $\delta_{\mathrm{query}}(u)=-F_m(q,x_u,\tau)$ in Algorithm~\ref{alg:tango-layer-search}. 
By invoking Algorithm~\ref{alg:tango-layer-search}, TANGO performs upper-layer navigation and base-layer exploration under the same exact TDVS scoring rule, and returns the $k$ candidates with the largest TDVS scores.

Taken together, TANGO assigns distinct roles to graph geometry and query scoring.
The layer-specific graph geometries determine the neighborhoods and connectivity available for traversal, while the exact TDVS score determines the exploration priority of visited vertices and their ranking.
With temporal factor caching, TANGO supports TDVS via one hierarchical traversal without materializing Chronos representations, requiring only one semantic inner product and constant-time scalar operations per visited candidate.

\section{Experimental Evaluation}
\label{sec:experiments}

In this section, we conduct experiments to evaluate TANGO. Our evaluation addresses three questions: 
(1) how TANGO compares to state-of-the-art graph-based methods in query and index performance (Section~\ref{sec:overall_performance}); 
(2) how robust TANGO is across different temporal settings (Section~\ref{sec:robustness});
and (3) how efficiently TANGO supports online insertion (Section~\ref{sec:insertion}).

All experiments are conducted on a server equipped with two Intel Xeon Platinum 8458P processors, providing 88 physical cores and 881 GiB of memory. The server runs Ubuntu 22.04.5 with Linux 5.15. All methods use 64 threads for index construction and a single thread for query processing. All implementations are compiled with GCC 11.4 in C++17 release mode using \texttt{-O3} and \texttt{-march=native}.

\subsection{Experimental Setup}
\label{sec:experimental-setup}

\noindent \textbf{Datasets and queries.}
We evaluate all methods on seven datasets of native unit-length embeddings, covering cardinalities from 500K to 10M and dimensionalities from 96 to 4096. Table~\ref{tab:datasets} summarizes their characteristics.
For Wikipedia-Qwen~\cite{zhu2025wikipedia_qwen}, OpenAI-1536~\cite{qdrant_openai1536}, and OpenAI-3072~\cite{qdrant_openai3072}, we hold out 1K vectors from each original collection as a strictly disjoint query set.
MSMARCO-1M and MSMARCO-10M are randomly sampled from the 113.5M-passage collection and use its independent real queries~\cite{cohere_msmarco_v21}.
HotpotQA uses its complete corpus and official queries~\cite{cohere_beir_embed}, while Yandex Deep-10M uses the first 10M Deep1B vectors and its public queries~\cite{bigann2021}.
Every dataset is evaluated using exactly 1,000 queries.

% \noindent \textbf{Datasets and queries.}
% We evaluate all methods on seven datasets of native unit-length embeddings, covering cardinalities from 500K to 10M and dimensionalities from 96 to 4096. Table~\ref{tab:datasets} summarizes their characteristics. 
% For Wikipedia-Qwen\footnote{https://huggingface.co/datasets/maknee/wikipedia\_qwen\_8b}, OpenAI-1536\footnote{https://huggingface.co/datasets/Qdrant/dbpedia-entities-openai3-text-embedding-3-large-1536-1M}, and OpenAI-3072\footnote{https://huggingface.co/datasets/Qdrant/dbpedia-entities-openai3-text-embedding-3-large-3072-1M}, we hold out 1K vectors from each original collection as a strictly disjoint query set. MSMARCO-1M and MSMARCO-10M are randomly sampled from the 113.5M-passage collection and use its independent real queries\footnote{https://huggingface.co/datasets/CohereLabs/msmarco-v2.1-embed-english-v3}. HotpotQA uses its complete corpus and official queries\footnote{https://huggingface.co/datasets/CohereLabs/beir-embed-english-v3}, while Yandex Deep-10M uses the first 10M Deep1B vectors and its public queries\footnote{https://big-ann-benchmarks.com/neurips21.html}.
% Every dataset is evaluated using exactly 1,000 queries. 

\begin{table}[tb]
  \centering
  \caption{Dataset statistics.}
  \label{tab:datasets}
  \small
  \setlength{\tabcolsep}{5pt}
  \renewcommand{\arraystretch}{1.05}
  \begin{tabular}{lcccc}
    \toprule
    \textbf{Dataset} &
    \textbf{Data Size} &
    \textbf{Dim.} &
    \textbf{Query Size} &
    \textbf{Data Type} \\
    \midrule
    Wikipedia-Qwen & 500{,}000       & 4096 & 1{,}000 & Text  \\
    OpenAI-1536     & 1{,}000{,}000  & 1536 & 1{,}000 & Text  \\
    OpenAI-3072     & 1{,}000{,}000  & 3072 & 1{,}000 & Text  \\
    MSMARCO-1M      & 1{,}000{,}000  & 1024 & 1{,}000 & Text  \\
    HotpotQA        & 5{,}233{,}329  & 1024 & 1{,}000 & Text  \\
    Yandex Deep-10M & 10{,}000{,}000 & 96   & 1{,}000 & Image \\
    MSMARCO-10M     & 10{,}000{,}000 & 1024 & 1{,}000 & Text  \\
    \bottomrule
  \end{tabular}
\end{table}

\begin{figure*}[tb]
    \centering

    \begin{minipage}[t]{0.59\textwidth}
        \centering
        \includegraphics[width=\linewidth]
        {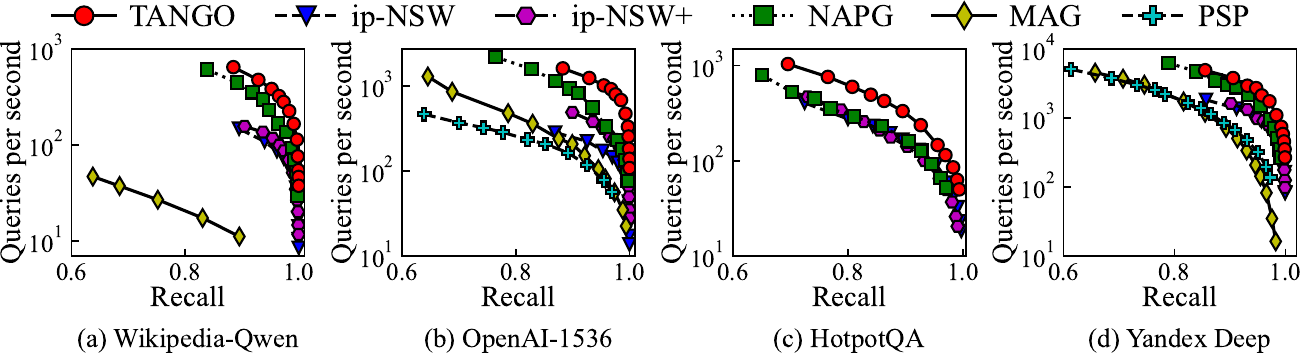}
        \captionof{figure}{
            Query performance comparison for multiplicative TDVS.
        }
        \label{fig:query_multiplicative}
    \end{minipage}
    \hspace{0.025\textwidth}
    \begin{minipage}[t]{0.36\textwidth}
        \centering
        \includegraphics[width=\linewidth]
        {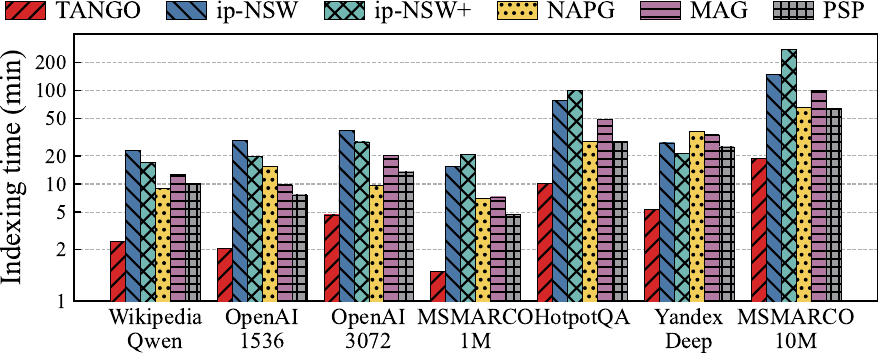}
        \captionof{figure}{
            Indexing time.
        }
        \label{fig:index-construction-time}
    \end{minipage}

\end{figure*}

\begin{figure*}[tb]
    \centering

    \begin{minipage}[t]{0.59\textwidth}
        \centering
        \includegraphics[width=\linewidth]
        {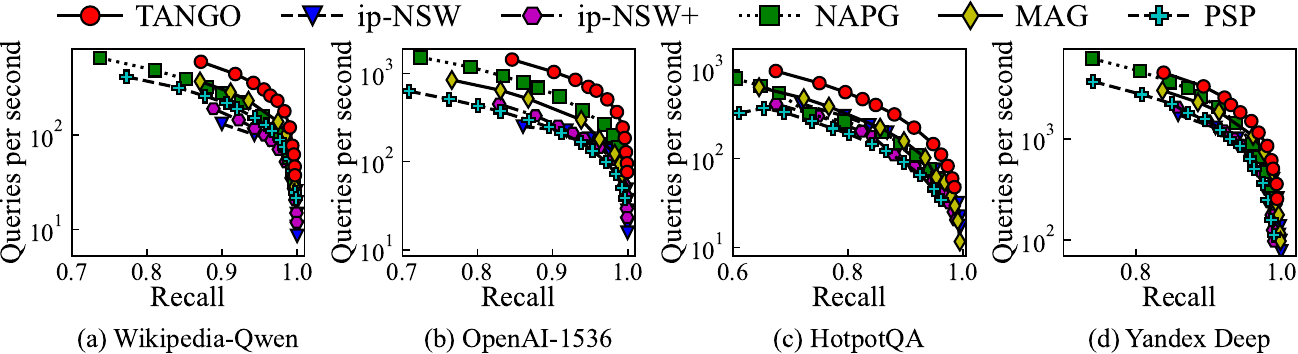}
        \captionof{figure}{
            Query performance comparison for additive TDVS.
        }
        \label{fig:query_additive}
    \end{minipage}
    \hspace{0.025\textwidth}
    \begin{minipage}[t]{0.36\textwidth}
        \centering
        \includegraphics[width=\linewidth]
        {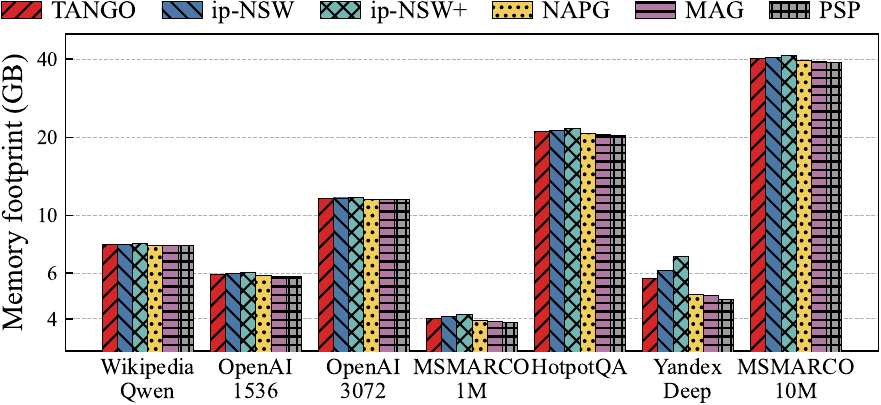}
        \captionof{figure}{
            Index memory footprint.
        }
        \label{fig:index-footprint}
    \end{minipage}

\end{figure*}

\noindent \textbf{Temporal workloads.}
Since there is no public dataset with native timestamps, we assign each base vector $x_i$ a timestamp $t_i\in[0,H]$, with larger values denoting newer objects, and model two representative semantic--temporal relationships. 
The \emph{topic-independent} workload captures scenarios where timestamps are unrelated to semantic
content, such as general-purpose document repositories and continuously collected logs. We construct this workload by drawing each timestamp independently from $\mathrm{Uniform}(0,H)$. 
The \emph{topic-correlated} workload captures scenarios where semantically related content occurs during similar periods, such as news articles and conversations with large language models, with topics remaining active over limited time windows. 
We construct this workload by clustering the unit vectors using spherical $k$-means, assigning each cluster $c$ an activity center $\mu_c\sim\mathrm{Uniform}(0,H)$, and sampling its member timestamps from a normal distribution centered at $\mu_c$ and truncated to $[0,H]$.

Following common practice, we parameterize exponential decay using the interpretable half-life $h$, where $\lambda=\ln 2/h$, and use $\alpha$ to control the semantic--temporal trade-off in additive TDVS. Unless otherwise stated, we set $H=\tau=365$ days, $h=90$ days, $\alpha=0.7$, adopt the topic-correlated workload, and evaluate both multiplicative and additive TDVS. Subsequent experiments vary the workload type, $h$, and $\alpha$ to comprehensively evaluate TANGO's robustness.

\noindent \textbf{Benchmark methods.}
This paper presents two solutions to TDVS: adapting state-of-the-art MIPS indexes through STR (Section~\ref{sec:str}) and natively indexing the Chronos geometry with TANGO (Sections~\ref{sec:chronos_framework} and~\ref{sec:tango}). 
Accordingly, we compare TANGO with five state-of-the-art graph-based MIPS methods adapted to TDVS via STR~\cite{morozov2018non,liu2020understanding,tan2021norm,chen2025maximum,chen2025stitching}. 
These methods span major advances in graph-based MIPS: \textbf{ip-NSW}~\cite{morozov2018non} directly constructs a non-metric
inner-product graph; \textbf{ip-NSW+}~\cite{liu2020understanding} introduces angular navigation; \textbf{NAPG}~\cite{tan2021norm} adopts norm-adjusted graph construction; \textbf{MAG}~\cite{chen2025stitching} combines inner-product and Euclidean geometries; and \textbf{PSP}~\cite{chen2025maximum} exploits query-scaled nearest-neighbor geometry with spherical pathways. 
Together, they cover foundational and recent state-of-the-art designs, forming strong baselines for both multiplicative and additive TDVS.

\noindent \textbf{Evaluation measures.}
For each query $(q,\tau)\in\mathcal Q$, let $\widehat{\mathcal N}^{\mathrm{tdvs}}_k(q,\tau)$ and $\mathcal N^{\mathrm{tdvs}}_k(q,\tau)$ denote the returned and exact top-$k$.
We measure answer quality by
$
    \mathrm{Recall@}k=\frac{1}{|\mathcal Q|}\sum_{(q,\tau)\in\mathcal Q}\frac{\left|\widehat{\mathcal N}^{\mathrm{tdvs}}_k(q,\tau) \cap \mathcal N^{\mathrm{tdvs}}_k(q,\tau) \right|}{k},
$
and query efficiency by queries per second (QPS). We sweep the search budget to obtain complete Recall--QPS curves. 
For index evaluation, we report index construction time to measure indexing efficiency and index memory footprint to quantify storage overhead. The footprint includes all artifacts retained in the deployed index, including transformed vectors, but excludes temporary construction artifacts.

\noindent \textbf{Parameter settings.}
We use the authors' open-source implementations of
ip-NSW~\cite{ipnsw_code}, ip-NSW+~\cite{ipnswplus_code}, MAG~\cite{mag_code}, and
PSP~\cite{psp_code}; since NAPG has no public implementation, we implement it following the original paper. 
For all methods, we adopt the recommended parameters
from their papers or the default settings of their codebases.
Specifically, ip-NSW uses $M=32$ and $\mathit{efConstruction}=1024$; ip-NSW+ additionally builds an auxiliary cosine graph with $M=10$ and
$\mathit{efConstruction}=100$; NAPG uses $M=16$,
$\mathit{efConstruction}=100$, five norm ranges, and 100 samples per range; MAG uses $L=60$, $R=48$, $C=300$, $R_{\mathrm{IP}}=20$, $M=64$, and threshold $8$; and PSP uses $K=400$, $L=800$, $R=40$, angle $60^\circ$, $M=5$. 
Both MAG and PSP require an external $k$NN graph, which we construct using FAISS NN-Descent with $K=400$, $L=420$, and 12 iterations.
TANGO uses $M=25$, $\mathit{efConstruction}=200$, $\kappa_{\mathrm{base}}=\frac{1}{63}$ at the base layer, and $\kappa_{\mathrm{nav}}=\frac{1}{15}$ at all upper layers across all datasets and both TDVS modes.
We vary each method's search parameter over a sufficiently broad range and compare complete Recall--QPS curves.
The number of returned results $k$ is set to 50 by default.

\subsection{Overall Performance}
\label{sec:overall_performance}

\subsubsection{Query Performance}
\label{sec:query_performance}

Figures~\ref{fig:query_multiplicative} and~\ref{fig:query_additive} present the Recall--QPS trade-offs of all methods for multiplicative and additive TDVS. The four representative datasets cover dimensionalities from 96 to 4096, and dataset sizes from 500K to 10M. 
TANGO consistently achieves the best query performance under both TDVS modes across all datasets. 
Compared with the best-performing competitor, TANGO achieves speedups of up to $3.06\times$ (recall $0.95$) and $3.50\times$ (recall $0.99$) under multiplicative TDVS; the corresponding gains under additive TDVS are $2.41\times$ and $2.44\times$.
MAG and PSP fail to reach high recall on Wikipedia-Qwen and HotpotQA under multiplicative TDVS because dispersed true neighbors weaken local connectivity in their semantic Euclidean graphs, limiting navigation and pruning.
The results show that TANGO's advantages are more pronounced on high-dimensional and large-scale datasets, with the performance gap over the baselines often widening in the high-recall region. This suggests that TANGO's TDVS-native graph geometry is especially beneficial when high dimensionality, large scale, or stringent recall makes graph navigation more challenging.

% \begin{figure*}[tb] 
% %\vspace*{-1.0cm}
% 	\centering
% 	\includegraphics[width=0.85\linewidth]{./figures/main_query_multiplicative_k50.pdf}
%  % \vspace*{-0.3cm}
% 	\caption{Query performance comparison of all methods for multiplicative TDVS.}
% 	\label{fig:query_multiplicative}
%  % \vspace*{-0.1cm}
% \end{figure*}

% \begin{figure*}[tb] 
% % \vspace*{-0.3cm}
% 	\centering
% 	\includegraphics[width=0.85\linewidth]{./figures/main_query_additive_k50.pdf}
%  % \vspace*{-0.3cm}
% 	\caption{Query performance comparison of all methods for additive TDVS.}
% 	\label{fig:query_additive}
%  % \vspace*{-0.1cm}
% \end{figure*}

\subsubsection{Index Performance}
\label{sec:index_performance}

Figures~\ref{fig:index-construction-time} and
\ref{fig:index-footprint} report the index construction time and index memory footprint under multiplicative TDVS across all seven datasets; additive TDVS exhibits similar trends, so we present only the multiplicative setting to save space.
TANGO is the fastest to construct on every dataset, outperforming the fastest competitor by up to $4.05\times$.
This advantage stems from directly encoding temporal awareness into a compact unified graph, avoiding the expensive candidate expansion, auxiliary graphs, and $k$NN-graph-based multi-stage pruning and refinement required by the competitors.
All methods have similar memory footprints because they all employ sparse graph structures with comparable connectivity budgets, leaving only limited differences in method-specific auxiliary storage.
Overall, TANGO delivers substantially faster index construction while maintaining a comparable memory footprint, demonstrating superior overall index performance.

% \begin{figure}[tb] 
% %\vspace*{-0.7cm}
% 	\centering
% 	\includegraphics[width=\linewidth]{./figures/index_construction_time.pdf}
%     \vspace*{-0.6cm}
% 	\caption{Indexing time comparison.}
% 	\label{fig:index-construction-time}
% \end{figure}

% \begin{figure}[tb] 
% % \vspace*{-0.3cm}
% 	\centering
% 	\includegraphics[width=\linewidth]{./figures/index_memory_footprint.pdf}
%  \vspace*{-0.6cm}
% 	\caption{Index memory footprint comparison.}
% 	\label{fig:index-footprint}
% % \vspace*{-0.2cm}
% \end{figure}

\subsection{Robustness to Temporal Settings}
\label{sec:robustness}

We next examine TANGO's robustness to temporal settings by varying the timestamp distribution and half-life $h$ under both TDVS modes, and the semantic--temporal weight $\alpha$ under additive TDVS.

\begin{figure}[t]
%\vspace*{-0.7cm}
  \centering
  \includegraphics[width=\linewidth]
  {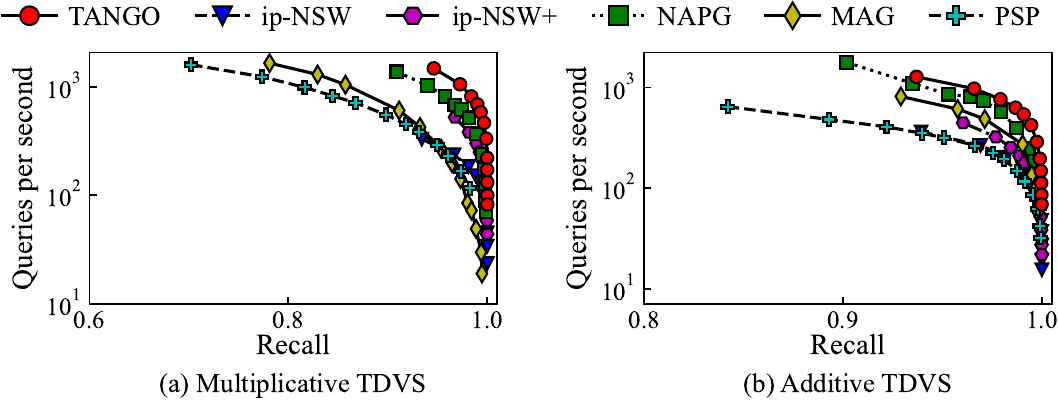}
  \caption{Comparison under the topic-independent timestamp distribution on OpenAI-1536.}
  \label{fig:time-distribution}
  \vspace*{-0.2cm}
\end{figure}

\subsubsection{Impact of Timestamp Distribution}
Figure~\ref{fig:time-distribution} evaluates the topic-independent setting, complementing the topic-correlated setting used in Figures~\ref{fig:query_multiplicative} and~\ref{fig:query_additive}.
In the default topic-correlated workload, vectors within each semantic cluster have timestamps sampled around a shared activity center $\mu_c$. Here, we remove this semantic--temporal correlation by sampling each timestamp independently from $\mathrm{Uniform}(0,H)$, regardless of cluster membership (i.e., semantic topics). This weakens the alignment between semantic and temporal neighborhoods and produces a different search geometry. 
TANGO achieves the best query performance under both TDVS modes. At recall $0.99$, it is $2.76\times$ and $2.02\times$ faster than the best-performing competitor under multiplicative and additive TDVS, respectively.
This demonstrates TANGO's robustness across both topic-independent and topic-correlated temporal workloads.

\begin{figure*}[t]
\vspace*{-1.1cm}
  \centering
  \includegraphics[width=\textwidth]
  {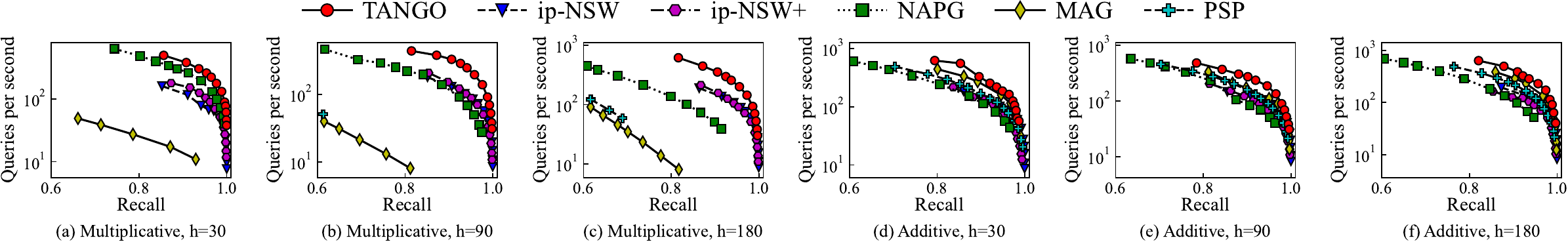}
  \caption{Comparison under different half-lives on OpenAI-3072.}
  \label{fig:half-life}
\end{figure*}

\begin{figure*}[t]
  \centering

  \begin{minipage}[t]{0.58\textwidth}
    \centering
    \includegraphics[width=\linewidth]
    {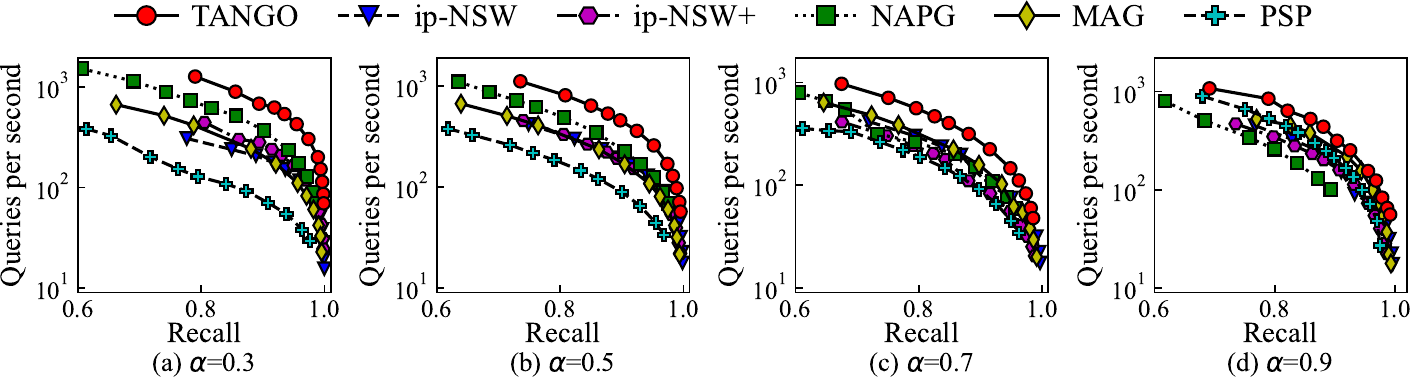}
    \captionof{figure}{Comparison under different semantic--temporal weights for additive TDVS on HotpotQA.}
    \label{fig:additive-alpha}
  \end{minipage}
  \hspace{0.025\textwidth}
  \begin{minipage}[t]{0.37\textwidth}
    \centering
    \includegraphics[width=\linewidth]
    {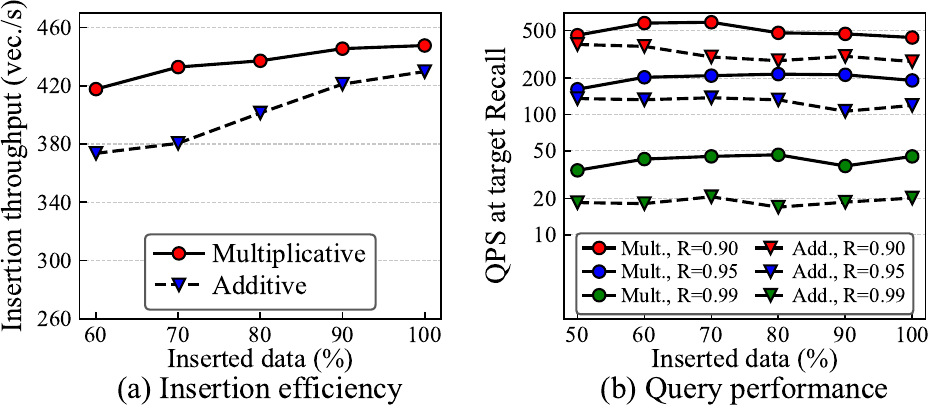}
    \captionof{figure}{Online insertion performance of TANGO on MSMARCO-10M.}
    \label{fig:dynamic-insertion}
  \end{minipage}

\end{figure*}

% \begin{figure*}[t]
%   \centering
%   \includegraphics[width=0.85\textwidth]
%   {./figures/additive_alpha_k50.pdf}
%   \caption{Comparison under different semantic--temporal weights for additive TDVS on HotpotQA.}
%   \label{fig:additive-alpha}
% \end{figure*}

\subsubsection{Impact of Half-Life}
Figure~\ref{fig:half-life} varies the half-life over
$h\in\{30,90,180\}$. A smaller $h$ induces faster temporal decay and stronger temporal selectivity, whereas a larger $h$ makes the objective increasingly dominated by semantic similarity.
TANGO achieves the best query performance in all six settings.
At recall $0.95$, TANGO outperforms the best-performing competitor by $2.00$--$2.21\times$ under multiplicative TDVS and $1.41$--$1.70\times$ under additive TDVS. 
Moreover, several competitors fail to reach high recall under multiplicative TDVS, while TANGO remains effective across all half-lives.
These results demonstrate TANGO's robustness across different half-life settings under both TDVS modes.

\subsubsection{Impact of the Semantic--Temporal Weight}
Figure~\ref{fig:additive-alpha} varies
$\alpha\in\{0.3,0.5,0.7,0.9\}$ under additive TDVS.
A smaller $\alpha$ assigns more weight to temporal freshness, whereas a larger $\alpha$ emphasizes semantic similarity.
TANGO achieves the best query performance for every value of $\alpha$.
At recall $0.95$, TANGO achieves speedups of $1.48$--$2.43\times$ over the best-performing competitor.
The largest gain occurs at $\alpha=0.3$ and generally narrows as $\alpha$ increases, consistent with the objective becoming closer to static semantic search.
Nevertheless, TANGO remains $1.55\times$ faster even at
$\alpha=0.9$.
These results demonstrate TANGO's robustness across different semantic--temporal weights under additive TDVS.

\subsection{Online Insertion Performance}
\label{sec:insertion}

% \begin{figure}[t]
%   \centering
%   \includegraphics[width=\linewidth]
%   {./figures/dynamic_insertion.pdf}
%   \caption{Online insertion performance of TANGO on MSMARCO-10M.}
%   \label{fig:dynamic-insertion}
%   \vspace*{-0.3cm}
% \end{figure}

Figure~\ref{fig:dynamic-insertion} shows TANGO's online insertion performance on the MSMARCO-10M dataset under both TDVS modes.
We build the initial index on the oldest 50\% of the vectors and then insert the remaining 50\% with a single thread in five batches, each containing the next 10\% in timestamp order.
As shown in Figure~\ref{fig:dynamic-insertion}(a), insertion throughput remains stable and even improves slightly as the index grows, because later arrivals benefit from increasingly well-covered recent-time neighborhoods, facilitating efficient local graph updates.
Figure~\ref{fig:dynamic-insertion}(b) further shows that TANGO maintains stable query performance throughout index growth at target recalls of $0.90$, $0.95$, and $0.99$.
These results demonstrate that TANGO supports efficient online insertion while maintaining robust query performance throughout continuous index growth.

\section{Related Work}
\label{sec:related_work}

\noindent \textbf{Time-aware retrieval and vector search.}
Temporal information has been incorporated into information retrieval through document-evolution modeling, recency-aware ranking, and temporal query modeling~\cite{elsas2010leveraging,dong2010towards,metzler2009improving}. Recent work on dynamic question answering, temporal RAG, and agent memory emphasizes retrieving current evidence~\cite{kasai2023realtime,wu2024time,qian2024timer4,park2023generative}, while EvoWiki and HoH evaluate the effects of evolving or outdated knowledge~\cite{tang2025evowiki,ouyang2025hoh}. For time-aware vector search, existing approaches mainly follow two paradigms. 
\emph{Filter-based} methods use temporal predicates, time ranges, or a specified time to define a hard eligible set~\cite{gollapudi2023filtered,patel2024acorn,
engels2024approximate,wang2025timestamp}. 
\emph{Rerank-based} methods first retrieve candidates under a static semantic objective and then incorporate recency through score fusion or post-retrieval reranking~\cite{dong2010towards,qian2024timer4,openclaw_memory_search}. 
In contrast, TDVS keeps all items eligible and incorporates continuous temporal decay directly into the vector-search objective, providing a native search formulation for freshness-sensitive workloads in which relevance evolves continuously over time.

\noindent \textbf{Kernel and feature mapping.}
A kernel is a similarity function that can be interpreted as an inner product after mapping objects into a possibly high-dimensional feature
space, allowing complex relationships to be analyzed using standard geometric tools~\cite{aronszajn1950theory}. 
Since this feature space may be implicit or infinite-dimensional, methods such as random Fourier features approximate certain kernels with explicit finite-dimensional vectors~\cite{rahimi2007random}.
In vector search, kernelized LSH supports approximate nearest-neighbor search under implicit kernel similarities~\cite{kulis2009kernelized},
while MIPS reductions construct explicit query and data mappings that enable existing hashing or nearest-neighbor indexes to preserve inner-product rankings~\cite{shrivastava2014asymmetric,neyshabur2015symmetric}.
Chronos introduces a novel implicit kernel-based metricization of the complete TDVS objective. It employs the Laplacian temporal kernel to derive an exact anchor-free TDVS metric and evaluates the induced distances directly from the original embeddings and timestamps.
Unlike prior kernel-based vector search approaches, Chronos requires neither kernel approximation nor explicit high-dimensional transformed representations, thereby avoiding both approximation error and feature-expansion overhead while preserving an exact metric formulation for TDVS-native indexing.

\noindent \textbf{Graph-based vector indexes.}
Vector-search techniques span several major families, including locality-sensitive hashing~\cite{tian2023db,wei2024det,wei2026pdet}, vector quantization~\cite{jegou2010product,gao2024rabitq,
gao2025practical}, subspace collision
~\cite{wei2025subspace,wei2026taco}, and graph~\cite{azizi2025graph,malkov2018efficient,fu2019fast}.
Graph-based indexes are the most widely adopted in practice because they offer strong recall--efficiency trade-offs across diverse workloads~\cite{wang2023graph,azizi2025graph}.
Representative ANNS graph indexes include HNSW~\cite{malkov2018efficient}, NSG~\cite{fu2019fast}, and DiskANN~\cite{subramanya2019diskann}.
For MIPS graph indexes, ip-NSW~\cite{morozov2018non} directly constructs an inner-product graph, while ip-NSW+~\cite{liu2020understanding}, NAPG~\cite{tan2021norm}, MAG~\cite{chen2025stitching}, and PSP~\cite{chen2025maximum} improve navigation through angular, norm-aware, hybrid-metric, or query-scaled geometries. 
These methods are designed for a fixed semantic objective.
TANGO introduces a TDVS-native hierarchy that assigns query-equivalent TimeLift geometries to different graph layers, with traversal guided by the exact TDVS score, thereby supporting both temporal locality and long-range semantic navigation.

\section{Conclusions} \label{sec:conclusion}

In this paper, we formalized time-decayed vector search (TDVS), which jointly models semantic similarity and temporal freshness. We derived STR as an exact reduction that enables existing MIPS indexes to support TDVS, and further presented Chronos, a TDVS-native metric framework with Query-Orthogonal TimeLift for controllable semantic--temporal geometry. 

Building on Chronos, we developed TANGO, a hierarchical graph index that combines temporal locality with long-range semantic connectivity and supports efficient online insertion. Extensive experiments on seven real-world datasets demonstrated that TANGO consistently outperforms state-of-the-art graph-based methods in query and index performance and remains robust across diverse temporal settings.

% \begin{acks}

% \end{acks}

%\clearpage

\bibliographystyle{ACM-Reference-Format}
\bibliography{ref}

@String{Computing = "Computing" }

@String{Computer = "{IEEE} Computer" }

@inproceedings{wei2026virtuouscycleaipoweredvector,
	title={{The Virtuous Cycle: AI-Powered Vector Search and Vector Search-Augmented AI}},
	author={Jiuqi Wei and Quanqing Xu and Chuanhui Yang},
	booktitle={2026 IEEE 42nd International Conference on Data Engineering (ICDE)},
	year={2026},
	organization={IEEE}
}

@inproceedings{ouyang2025hoh,
  title={Hoh: A dynamic benchmark for evaluating the impact of outdated information on retrieval-augmented generation},
  author={Ouyang, Jie and Pan, Tingyue and Cheng, Mingyue and Yan, Ruiran and Luo, Yucong and Lin, Jiaying and Liu, Qi},
  booktitle={Proceedings of the 63rd Annual Meeting of the Association for Computational Linguistics (Volume 1: Long Papers)},
  pages={6036--6063},
  year={2025}
}

@inproceedings{park2023generative,
  title={Generative agents: Interactive simulacra of human behavior},
  author={Park, Joon Sung and O'Brien, Joseph and Cai, Carrie Jun and Morris, Meredith Ringel and Liang, Percy and Bernstein, Michael S},
  booktitle={Proceedings of the 36th annual acm symposium on user interface software and technology},
  pages={1--22},
  year={2023}
}

@inproceedings{DBLP:conf/wims/EchihabiZP20,
  author       = {Karima Echihabi and
                  Kostas Zoumpatianos and
                  Themis Palpanas},
  title        = {Scalable Machine Learning on High-Dimensional Vectors: From Data Series
                  to Deep Network Embeddings},
  booktitle    = {{10th International Conference on Web Intelligence,
                  Mining and Semantics (WIMS)}},
  pages        = {1--6},
  year         = {2020}
}

@article{hydra1,
  author       = {Karima Echihabi and
                  Kostas Zoumpatianos and
                  Themis Palpanas and
                  Houda Benbrahim},
  title        = {The Lernaean Hydra of Data Series Similarity Search: An Experimental
                  Evaluation of the State of the Art},
  journal      = {Proc. {VLDB} Endow.},
  volume       = {12},
  number       = {2},
  pages        = {112--127},
  year         = {2018},
  url          = {http://www.vldb.org/pvldb/vol12/p112-echihabi.pdf},
  doi          = {10.14778/3282495.3282498},
  bibsource    = {dblp computer science bibliography, https://dblp.org}
}

@article{hydra2,
  author       = {Karima Echihabi and
                  Kostas Zoumpatianos and
                  Themis Palpanas and
                  Houda Benbrahim},
  title        = {Return of the Lernaean Hydra: Experimental Evaluation of Data Series
                  Approximate Similarity Search},
  journal      = {Proc. {VLDB} Endow.},
  volume       = {13},
  number       = {3},
  pages        = {403--420},
  year         = {2019},
  url          = {http://www.vldb.org/pvldb/vol13/p403-echihabi.pdf},
  doi          = {10.14778/3368289.3368303},
  bibsource    = {dblp computer science bibliography, https://dblp.org}
}

@article{kasai2023realtime,
  title={Realtime qa: What's the answer right now?},
  author={Kasai, Jungo and Sakaguchi, Keisuke and Le Bras, Ronan and Asai, Akari and Yu, Xinyan and Radev, Dragomir and Smith, Noah A and Choi, Yejin and Inui, Kentaro and others},
  journal={Advances in neural information processing systems},
  volume={36},
  pages={49025--49043},
  year={2023}
}

@inproceedings{qian2024timer4,
  title={TimeR4: Time-aware retrieval-augmented large language models for temporal knowledge graph question answering},
  author={Qian, Xinying and Zhang, Ying and Zhao, Yu and Zhou, Baohang and Sui, Xuhui and Zhang, Li and Song, Kehui},
  booktitle={Proceedings of the 2024 conference on empirical methods in natural language processing},
  pages={6942--6952},
  year={2024}
}

@inproceedings{ryu2025news,
  title={Is This News Still Interesting to You?: Lifetime-aware Interest Matching for News Recommendation},
  author={Ryu, Seongeun and Ko, Yunyong and Kim, Sang-Wook},
  booktitle={Proceedings of the 34th ACM International Conference on Information and Knowledge Management},
  pages={2515--2524},
  year={2025}
}

@inproceedings{wu2024time,
  title={Time-sensitve retrieval-augmented generation for question answering},
  author={Wu, Feifan and Liu, Lingyuan and He, Wentao and Liu, Ziqi and Zhang, Zhiqiang and Wang, Haofen and Wang, Meng},
  booktitle={Proceedings of the 33rd ACM International Conference on Information and Knowledge Management},
  pages={2544--2553},
  year={2024}
}

@inproceedings{tang2025evowiki,
  title={Evowiki: Evaluating llms on evolving knowledge},
  author={Tang, Wei and Cao, Yixin and Deng, Yang and Ying, Jiahao and Wang, Bo and Yang, Yizhe and Zhao, Yuyue and Zhang, Qi and Huang, Xuan-Jing and Jiang, Yu-Gang and others},
  booktitle={Proceedings of the 63rd Annual Meeting of the Association for Computational Linguistics (Volume 1: Long Papers)},
  pages={948--964},
  year={2025}
}

@article{wang2023graph,
  title={Graph-and Tree-based Indexes for High-dimensional Vector Similarity Search: Analyses, Comparisons, and Future Directions.},
  author={Wang, Zeyu and Wang, Peng and Palpanas, Themis and Wang, Wei},
  journal={IEEE Data Eng. Bull.},
  volume={47},
  number={3},
  pages={3--21},
  year={2023}
}

@article{azizi2025graph,
  title={Graph-based vector search: An experimental evaluation of the state-of-the-art},
  author={Azizi, Ilias and Echihabi, Karima and Palpanas, Themis},
  journal={Proceedings of the ACM on Management of Data},
  volume={3},
  number={1},
  pages={1--31},
  year={2025},
  publisher={ACM New York, NY, USA}
}

@inproceedings{wang2025timestamp,
  title={Timestamp Approximate Nearest Neighbor Search over High-Dimensional Vector Data},
  author={Wang, Yuxiang and He, Ziyuan and Tong, Yongxin and Zhou, Zimu and Zhong, Yiman},
  booktitle={2025 IEEE 41st International Conference on Data Engineering (ICDE)},
  pages={3043--3055},
  year={2025},
  organization={IEEE}
}

@inproceedings{gollapudi2023filtered,
  title={Filtered-diskann: Graph algorithms for approximate nearest neighbor search with filters},
  author={Gollapudi, Siddharth and Karia, Neel and Sivashankar, Varun and Krishnaswamy, Ravishankar and Begwani, Nikit and Raz, Swapnil and Lin, Yiyong and Zhang, Yin and Mahapatro, Neelam and Srinivasan, Premkumar and others},
  booktitle={Proceedings of the ACM Web Conference 2023},
  pages={3406--3416},
  year={2023}
}

@inproceedings{dong2010towards,
  title={Towards recency ranking in web search},
  author={Dong, Anlei and Chang, Yi and Zheng, Zhaohui and Mishne, Gilad and Bai, Jing and Zhang, Ruiqiang and Buchner, Karolina and Liao, Ciya and Diaz, Fernando},
  booktitle={Proceedings of the third ACM international conference on Web search and data mining},
  pages={11--20},
  year={2010}
}

@misc{openclaw_memory_search,
  author       = {{OpenClaw}},
  title        = {Memory Search},
  year         = {2026},
  howpublished = {\url{https://docs.openclaw.ai/concepts/memory-search}},
  note         = {OpenClaw documentation, accessed 2026-04-16}
}

@inproceedings{hinneburg2000nearest,
	title={What is the nearest neighbor in high dimensional spaces?},
	author={Hinneburg, Alexander and Aggarwal, Charu C and Keim, Daniel A},
	booktitle={26th Internat. Conference on Very Large Databases},
	pages={506--515},
	year={2000}
}

@inproceedings{borodin1999lower,
  title={Lower bounds for high dimensional nearest neighbor search and related problems},
  author={Borodin, Allan and Ostrovsky, Rafail and Rabani, Yuval},
  booktitle={Proceedings of the thirty-first annual ACM symposium on Theory of computing},
  pages={312--321},
  year={1999}
}

@article{murre2015replication,
  title={Replication and analysis of Ebbinghaus’ forgetting curve},
  author={Murre, Jaap MJ and Dros, Joeri},
  journal={PloS one},
  volume={10},
  number={7},
  pages={e0120644},
  year={2015},
  publisher={Public Library of Science San Francisco, CA USA}
}

@inproceedings{settles2016trainable,
  title={A trainable spaced repetition model for language learning},
  author={Settles, Burr and Meeder, Brendan},
  booktitle={Proceedings of the 54th annual meeting of the association for computational linguistics (volume 1: long papers)},
  pages={1848--1858},
  year={2016}
}

@article{morozov2018non,
  title={Non-metric similarity graphs for maximum inner product search},
  author={Morozov, Stanislav and Babenko, Artem},
  journal={Advances in Neural Information Processing Systems},
  volume={31},
  year={2018}
}

@inproceedings{liu2020understanding,
  title={Understanding and improving proximity graph based maximum inner product search},
  author={Liu, Jie and Yan, Xiao and Dai, Xinyan and Li, Zhirong and Cheng, James and Yang, Ming-Chang},
  booktitle={Proceedings of the AAAI Conference on Artificial Intelligence},
  volume={34},
  number={01},
  pages={139--146},
  year={2020}
}

@inproceedings{tan2021norm,
  title={Norm adjusted proximity graph for fast inner product retrieval},
  author={Tan, Shulong and Xu, Zhaozhuo and Zhao, Weijie and Fei, Hongliang and Zhou, Zhixin and Li, Ping},
  booktitle={Proceedings of the 27th ACM SIGKDD Conference on Knowledge Discovery \& Data Mining},
  pages={1552--1560},
  year={2021}
}

@article{chen2025maximum,
  title={Maximum Inner Product is Query-Scaled Nearest Neighbor},
  author={Chen, Tingyang and Fu, Cong and Wang, Kun and Ke, Xiangyu and Gao, Yunjun and Zhou, Wenchao and Ni, Yabo and Zeng, Anxiang},
  journal={Proceedings of the VLDB Endowment},
  volume={18},
  number={6},
  pages={1770--1783},
  year={2025},
  publisher={VLDB Endowment}
}

@inproceedings{chen2025stitching,
  title={Stitching inner product and euclidean metrics for topology-aware maximum inner product search},
  author={Chen, Tingyang and Fu, Cong and Ke, Xiangyu and Gao, Yunjun and Ni, Yabo and Zeng, Anxiang},
  booktitle={Proceedings of the 48th International ACM SIGIR Conference on Research and Development in Information Retrieval},
  pages={2341--2350},
  year={2025}
}

@inproceedings{reimers2019sentence,
  title={Sentence-bert: Sentence embeddings using siamese bert-networks},
  author={Reimers, Nils and Gurevych, Iryna},
  booktitle={Proceedings of the 2019 conference on empirical methods in natural language processing and the 9th international joint conference on natural language processing (EMNLP-IJCNLP)},
  pages={3982--3992},
  year={2019}
}

@article{zhang2025qwen3,
  title={Qwen3 embedding: Advancing text embedding and reranking through foundation models},
  author={Zhang, Yanzhao and Li, Mingxin and Long, Dingkun and Zhang, Xin and Lin, Huan and Yang, Baosong and Xie, Pengjun and Yang, An and Liu, Dayiheng and Lin, Junyang and others},
  journal={arXiv preprint arXiv:2506.05176},
  year={2025}
}

@misc{openai_embeddings_documentation,
  author       = {OpenAI},
  title        = {OpenAI Embeddings Documentation},
  year         = {2024},
  howpublished = {\url{https://platform.openai.com/docs/guides/embeddings}},
  note         = {Accessed: 2026-07-25}
}

@article{malkov2018efficient,
  title={Efficient and robust approximate nearest neighbor search using hierarchical navigable small world graphs},
  author={Malkov, Yu A and Yashunin, Dmitry A},
  journal={IEEE transactions on pattern analysis and machine intelligence},
  volume={42},
  number={4},
  pages={824--836},
  year={2018},
  publisher={IEEE}
}

@inproceedings{elsas2010leveraging,
  title={Leveraging temporal dynamics of document content in relevance ranking},
  author={Elsas, Jonathan L and Dumais, Susan T},
  booktitle={Proceedings of the third ACM international conference on Web search and data mining},
  pages={1--10},
  year={2010}
}

@article{patel2024acorn,
  title={Acorn: Performant and predicate-agnostic search over vector embeddings and structured data},
  author={Patel, Liana and Kraft, Peter and Guestrin, Carlos and Zaharia, Matei},
  journal={Proceedings of the ACM on Management of Data},
  volume={2},
  number={3},
  pages={1--27},
  year={2024},
  publisher={ACM New York, NY, USA}
}

@inproceedings{engels2024approximate,
  title={Approximate nearest neighbor search with window filters},
  author={Engels, Joshua and Landrum, Benjamin and Yu, Shangdi and Dhulipala, Laxman and Shun, Julian},
  booktitle={Proceedings of the 41st International Conference on Machine Learning},
  pages={12469--12490},
  year={2024}
}

@inproceedings{metzler2009improving,
  title={Improving search relevance for implicitly temporal queries},
  author={Metzler, Donald and Jones, Rosie and Peng, Fuchun and Zhang, Ruiqiang},
  booktitle={Proceedings of the 32nd international ACM SIGIR conference on Research and development in information retrieval},
  pages={700--701},
  year={2009}
}

@article{aronszajn1950theory,
  title={Theory of reproducing kernels},
  author={Aronszajn, Nachman},
  journal={Transactions of the American mathematical society},
  volume={68},
  number={3},
  pages={337--404},
  year={1950}
}

@article{rahimi2007random,
  title={Random features for large-scale kernel machines},
  author={Rahimi, Ali and Recht, Benjamin},
  journal={Advances in neural information processing systems},
  volume={20},
  year={2007}
}

@inproceedings{kulis2009kernelized,
  title={Kernelized locality-sensitive hashing for scalable image search},
  author={Kulis, Brian and Grauman, Kristen},
  booktitle={2009 IEEE 12th international conference on computer vision},
  pages={2130--2137},
  year={2009},
  organization={IEEE}
}

@article{shrivastava2014asymmetric,
  title={Asymmetric LSH (ALSH) for sublinear time maximum inner product search (MIPS)},
  author={Shrivastava, Anshumali and Li, Ping},
  journal={Advances in neural information processing systems},
  volume={27},
  year={2014}
}

@inproceedings{neyshabur2015symmetric,
  title={On symmetric and asymmetric LSHs for inner product search},
  author={Neyshabur, Behnam and Srebro, Nathan},
  booktitle={Proceedings of the 32nd International Conference on International Conference on Machine Learning-Volume 37},
  pages={1926--1934},
  year={2015}
}

@article{wei2025subspace,
  title={Subspace collision: An efficient and accurate framework for high-dimensional approximate nearest neighbor search},
  author={Wei, Jiuqi and Lee, Xiaodong and Liao, Zhenyu and Palpanas, Themis and Peng, Botao},
  journal={Proceedings of the ACM on Management of Data},
  volume={3},
  number={1},
  pages={1--29},
  year={2025},
  publisher={ACM New York, NY, USA}
}

@article{wei2026taco,
  title={Taco: Data-adaptive and query-aware subspace collision for high-dimensional approximate nearest neighbor search},
  author={Wei, Jiuqi and Liao, Zhenyu and Han, Ruoyu and Xu, Quanqing and Yang, Chuanhui and Palpanas, Themis},
  journal={Proceedings of the ACM on Management of Data},
  volume={4},
  number={3 (SIGMOD},
  pages={1--28},
  year={2026},
  publisher={ACM New York, NY, USA}
}

@article{wei2024det,
  title={DET-LSH: A Locality-Sensitive Hashing Scheme with Dynamic Encoding Tree for Approximate Nearest Neighbor Search},
  author={Wei, Jiuqi and Peng, Botao and Lee, Xiaodong and Palpanas, Themis},
  journal={Proceedings of the VLDB Endowment},
  volume={17},
  number={9},
  pages={2241--2254},
  year={2024},
  publisher={VLDB Endowment}
}

@article{wei2026pdet,
  title={Pdet-lsh: Scalable in-memory indexing for high-dimensional approximate nearest neighbor search with quality guarantees},
  author={Wei, Jiuqi and Lee, Xiaodong and Peng, Botao and Xu, Quanqing and Yang, Chuanhui and Palpanas, Themis},
  journal={IEEE Transactions on Knowledge and Data Engineering},
  year={2026},
  publisher={IEEE}
}

@article{tian2023db,
  title={DB-LSH 2.0: Locality-sensitive hashing with query-based dynamic bucketing},
  author={Tian, Yao and Zhao, Xi and Zhou, Xiaofang},
  journal={IEEE Transactions on Knowledge and Data Engineering},
  volume={36},
  number={3},
  pages={1000--1015},
  year={2023},
  publisher={IEEE}
}

@article{jegou2010product,
  title={Product quantization for nearest neighbor search},
  author={Jegou, Herve and Douze, Matthijs and Schmid, Cordelia},
  journal={IEEE transactions on pattern analysis and machine intelligence},
  volume={33},
  number={1},
  pages={117--128},
  year={2010},
  publisher={IEEE}
}

@article{gao2024rabitq,
  title={Rabitq: Quantizing high-dimensional vectors with a theoretical error bound for approximate nearest neighbor search},
  author={Gao, Jianyang and Long, Cheng},
  journal={Proceedings of the ACM on Management of Data},
  volume={2},
  number={3},
  pages={1--27},
  year={2024},
  publisher={ACM New York, NY, USA}
}

@article{gao2025practical,
  title={Practical and asymptotically optimal quantization of high-dimensional vectors in euclidean space for approximate nearest neighbor search},
  author={Gao, Jianyang and Gou, Yutong and Xu, Yuexuan and Yang, Yongyi and Long, Cheng and Wong, Raymond Chi-Wing},
  journal={Proceedings of the ACM on Management of Data},
  volume={3},
  number={3},
  pages={1--26},
  year={2025},
  publisher={ACM New York, NY, USA}
}

@article{fu2019fast,
  title={Fast approximate nearest neighbor search with the navigating spreading-out graph},
  author={Fu, Cong and Xiang, Chao and Wang, Changxu and Cai, Deng},
  journal={Proceedings of the VLDB Endowment},
  volume={12},
  number={5},
  pages={461--474},
  year={2019},
  publisher={VLDB Endowment}
}

@inproceedings{subramanya2019diskann,
  title={Diskann: Fast accurate billion-point nearest neighbor search on a single node},
  author={Subramanya, Suhas Jayaram and Devvrit and Kadekodi, Rohan and Krishaswamy, Ravishankar and Simhadri, Harsha Vardhan},
  booktitle={Proceedings of the 33rd International Conference on Neural Information Processing Systems},
  pages={13766--13776},
  year={2019}
}

@misc{qdrant_search_relevance,
  author = {{Qdrant}},
  title  = {Search Relevance},
  year   = {2026},
  url    = {https://qdrant.tech/documentation/search/search-relevance/},
  note   = {Qdrant documentation, accessed 2026-08-20}
}

@misc{zhu2025wikipedia_qwen,
  author = {Henry Zhu},
  title  = {Vector Database Embeddings Dataset},
  year   = {2025},
  url    = {https://huggingface.co/datasets/maknee/wikipedia_qwen_8b},
  note   = {Hugging Face dataset, accessed 2026-08-24}
}

@misc{qdrant_openai1536,
  author = {{Qdrant}},
  title  = {{DBpedia Entities with OpenAI text-embedding-3-large, 1536 Dimensions, 1M}},
  year   = {2024},
  url    = {https://huggingface.co/datasets/Qdrant/dbpedia-entities-openai3-text-embedding-3-large-1536-1M},
  note   = {Hugging Face dataset, accessed 2026-08-24}
}

@misc{qdrant_openai3072,
  author = {{Qdrant}},
  title  = {{DBpedia Entities with OpenAI text-embedding-3-large, 3072 Dimensions, 1M}},
  year   = {2024},
  url    = {https://huggingface.co/datasets/Qdrant/dbpedia-entities-openai3-text-embedding-3-large-3072-1M},
  note   = {Hugging Face dataset, accessed 2026-08-24}
}

@misc{cohere_msmarco_v21,
  author = {{Cohere Labs}},
  title  = {{TREC-RAG 2024 Corpus (MSMARCO 2.1) -- Encoded with Cohere Embed English v3}},
  year   = {2024},
  url    = {https://huggingface.co/datasets/CohereLabs/msmarco-v2.1-embed-english-v3},
  note   = {Hugging Face dataset, accessed 2026-08-24}
}

@misc{cohere_beir_embed,
  author = {{Cohere Labs}},
  title  = {{BEIR Embeddings with Cohere embed-english-v3.0 Model}},
  year   = {2024},
  url    = {https://huggingface.co/datasets/CohereLabs/beir-embed-english-v3},
  note   = {Hugging Face dataset, accessed 2026-08-24}
}

@misc{bigann2021,
  author = {{Big ANN Benchmarks}},
  title  = {{NeurIPS 2021 Big ANN Benchmark}},
  year   = {2021},
  url    = {https://big-ann-benchmarks.com/neurips21.html},
  note   = {Accessed 2026-08-24}
}

@misc{ipnsw_code,
  author       = {Stanislav Morozov},
  title        = {{ip-NSW}},
  howpublished = {\url{https://github.com/stanis-morozov/ip-nsw}},
  note         = {GitHub repository, accessed 2026-08-24}
}

@misc{ipnswplus_code,
  author       = {Jie Liu},
  title        = {{ip-NSW+: GraphMIPS Implementation}},
  howpublished = {\url{https://github.com/Jerry-liujie/ip-nsw/tree/GraphMIPS}},
  note         = {GitHub repository, accessed 2026-08-24}
}

@misc{mag_code,
  author       = {{ZJU-DAILY}},
  title        = {{MAG}},
  howpublished = {\url{https://github.com/ZJU-DAILY/MAG}},
  note         = {GitHub repository, accessed 2026-08-24}
}

@misc{psp_code,
  author       = {{ZJU-DAILY}},
  title        = {{PSP}},
  howpublished = {\url{https://github.com/ZJU-DAILY/PSP}},
  note         = {GitHub repository, accessed 2026-08-24}
}

\end{document}